\newtheorem{theorem}{Theorem}[section]
\newtheorem{lemma}{Lemma}[section]
\newtheorem{conjecture}{Conjecture}[section]
\newtheorem{proof}{Proof}[section]
\newtheorem{example}{Example}[section]
\newtheorem{corollary}{Corollary}[section]
\title{\bf Characteristic Lie Algebras \\ of Integrable Differential-Difference Equations in 3D}
\author{\bf I. Habibullin and A. Khakimova}
\date{}
\begin{document}
\maketitle

\abstract{The purpose of this article is to develop an algebraic approach to the problem of integrable classification of differential-difference equations with one continuous and two discrete variables. As a classification criterion, we put forward the following hypothesis. Any integrable equation of the type under consideration admits an infinite sequence of finite-field Darboux-integrable reductions. The property of Darboux integrability of a finite-field system is formalized as finite-dimensionality condition of its characteristic Lie-Rinehart algebras. That allows one to derive effective integrability conditions in the form of differential equations on the right hand side of the equation under study. To test the hypothesis, we use known integrable equations from the class under consideration. In this article, we show that all known examples do have this property.}

\maketitle

\section{Introduction}

The theory of integrability is an important component of modern mathematical physics. At present, the problem of classifying integrable equations with three and more independent variables is being actively studied. For these classes of equations, in contrast to the case of two variables, there are several approaches to the problem of integrable classification (see \cite{Bogdanov, BogdanovKonopelchenko, DavidCalderbank, CalderbankKruglikov, CleryFerapontov, DoubrovFerapontovKruglikovNovikov2018, DoubrovFerapontovKruglikovNovikov2019, DunajskiMasonTod, FerapontovHadjikosKhusnutdinova, FerapontovKhusnutdinova, FerapontovKruglikov, FerapontovMoro, FerapontovMoroNovikov, FerapontovOdesskii, FNR, GibbonsKodama, GibbonsTsarev, Hitchin, Krichever, OdesskiiSokolov, Pavlov2003, Pavlov2004, Penrose, Ward, Zakharov}) mostly based on geometric ideas and constructions. Note that the well-known generalized symmetry method is not efficient in higher dimensions due to the non-locality problem.

The hydrodynamic reduction method is one of the most widespread among specialists in mathematical physics. Let us briefly explain the essence of this method for the classification of three-dimensional integrable equations, based on the following two-stage procedure. The first consists in classification of dispersionless integrable systems by using the method of hydrodynamic reductions (initiated by Gibbons, Kodama and Tsarev (see \cite{GibbonsKodama, GibbonsTsarev}) and developed into an efficient integrability test by the group of Ferapontov, Khusnutdinova, Novikov, Pavlov, Odesskii and Sokolov (\cite{FerapontovKhusnutdinova, OdesskiiSokolov, Pavlov2003, Pavlov2004}). Second is reconstruction of dispersive deformations of dispersionless integrable systems based on the method of dispersive deformations of hydrodynamic reductions developed by Ferapontov, Moro and Novikov (\cite{FerapontovMoro, FerapontovMoroNovikov}).

The article discusses a problem of developing an  algebraic integrability criterion  based on the Darboux integrable  reductions for the differential-difference equations of the form:
\begin{align} \label{eq0}  
u_{n+1,x}^j = F(u_{n,x}^j,u_n^{j+1},u_{n+1}^j,u_n^j,u_{n+1}^{j-1}), \quad -\infty < n,j < \infty,
\end{align}
where the sought function $u$ depends on three variables, on a real $x$ and two integers $j$ and $n$. We assume that 
\begin{align*}
\frac{\partial F}{\partial u_{n,x}^j}\neq 0, \quad \frac{\partial F}{\partial u_n^{j+1}}\neq 0, \quad \frac{\partial F}{\partial u_{n+1}^{j-1}}\neq 0. 
\end{align*}
The lattice is defined on a quadrilateral graph. 
To preserve the parity of the forward and backward directions in $n$ we assume that equation (\ref{eq0}) can be uniquely rewritten as
\begin{align*}
u_{n-1,x}^j=G(u_{n,x}^j,u_{n-1}^{j+1},u_{n}^j,u_{n-1}^j,u_{n}^{j-1}).
\end{align*}

Integrable equations of the form (\ref{eq0}) have been studied by many authors (see \cite{FNR} and the references therein). Note that, in some particular cases, equations of the form (\ref{eq0}) can be obtained as the Backlund transformation for the two-dimensional Toda-type chains
\begin{align}  \label{eq0B}  
u_{n,xy} = f(u_{n+1},u_n,u_{n-1}, u_{n,x},u_{n,y}), \quad -\infty < n < \infty.
\end{align}

Earlier the lattices (\ref{eq0}) have been investigated by means of the over mentioned method of hydrodynamic reductions \cite{FNR}. Within the framework of this approach, a certain class of lattices of the form (\ref{eq0}) is studied and integrable cases are selected.
 
In the studies \cite{HabibullinKuznetsova, HabibullinKuznetsovaSakieva, HabPoptsovaSIGMA17, HabPoptsovaUMJ} a classification algorithm was developed for integrable equations with three independent variables  based on the idea of the Darboux integrable reductions and characteristic Lie algebras. Efficiency of the method was illustrated by applying to the equations from the class (\ref{eq0B}) (see \cite{HabibullinKuznetsova}).

It was observed in the article \cite{FHKN} that it is very convenient to study the classification problem for (\ref{eq0B}) by combining different approaches.  At the first step,   equations are selected that satisfy the requirement  that the dispersionless limit of the equation is integrable, that is  its characteristic variety defines a conformal structure, which is Einstein-Weyl, on every solution. As a result one obtaines a class of equations having the  freedom in only one function of one variable and a few constant parameters. And then, to the obtained equations from this class  the test of Darboux integrable reductions is applied.

In this paper, we show at the level of examples that integrable  differential-difference lattices of the form (\ref{eq0}) admit Darboux integrable reductions.  More precisely, the purpose of the article is to approve the following 

\begin{conjecture} \label{conjecture} A lattice of the form (\ref{eq0}) is integrable if and only if there exists a pair of functions $H^{(1)}$ and $H^{(2)}$ of four variables such that for any choice of the integer $N$, a system of the hyperbolic type differential-difference equations
\begin{align}
&u_{n+1,x}^1 =H^{(1)}(u_{n,x}^1,u_n^{2},u_{n+1}^1,u_n^1), \nonumber \\
&u_{n+1,x}^j =F^j(u_{n,x}^j,u_n^{j+1},u_{n+1}^j,u_n^j,u_{n+1}^{j-1}), \quad 1 < j < N, \label{eq00} \\
&u_{n+1,x}^N =H^{(2)}(u_{n,x}^N,u_{n+1}^N,u_n^N,u_{n+1}^{N-1}), \quad -\infty < n < \infty \nonumber
\end{align}  
obtained from (\ref{eq0}) is integrable in the sense of Darboux.
\end{conjecture}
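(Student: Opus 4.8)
The plan is to treat the statement not as a single deductive argument but as a verification program, since it asserts an equivalence between integrability of the infinite lattice \eqref{eq0} and Darboux integrability of every finite-field truncation \eqref{eq00}. I would separate the two implications. The forward implication---that an integrable $F$ admits boundary data $H^{(1)},H^{(2)}$ for which all the cut-off systems are Darboux integrable---is the one accessible by direct computation, and I would establish it by running each known integrable representative of \eqref{eq0} through the characteristic-algebra test. The converse reads as a classification assertion: the requirement that \emph{some} admissible truncation be Darboux integrable for \emph{every} $N$ forces enough differential constraints on $F$ to recover exactly the integrable list. I would argue the converse only in the weaker sense of checking that these constraints reproduce the known classification, rather than claiming a rigorous sufficiency proof.

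First I would fix an integrable $F$ and determine the admissible boundary functions $H^{(1)}$ and $H^{(2)}$. These truncate the quadrilateral lattice to a strip $1\le j\le N$ by cutting the couplings to the absent fields $u_n^{0}$ and $u_{n+1}^{N}$, and they are not arbitrary: they must be chosen so that the strip inherits a hyperbolic structure compatible with Darboux integrability. I would find them by degenerating the $u_{n+1}^{j-1}$-dependence at $j=1$ and the $u_n^{j+1}$-dependence at $j=N$ in a manner consistent with the lattice's symmetries, treating the search for $H^{(1)},H^{(2)}$ as the first genuine subproblem.

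Second, for the finite system \eqref{eq00} I would build the two characteristic Lie--Rinehart algebras. Using the dynamics to express every variable $u_{n,x}^{j},u_{n,xx}^{j},\dots$ and its $n$-shifts in terms of a fixed coordinate set, the obstructions to the existence of an $x$-integral $J$ (with $(T_n-1)J=0$) and of an $n$-integral $I$ (with $D_x I=0$) are encoded by explicit generating vector fields. I would write these generators, form successive nested commutators, and locate the order at which each new commutator becomes a functional combination of the previous ones; Darboux integrability is then the finite-dimensionality of both algebras over the base ring, together with the resulting complete sets of integrals.

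The hard part is uniformity in $N$. For small $N$ one simply tabulates commutators, but the number of generators grows with the strip width, so I would need either an induction on $N$ supported by a uniform description of a spanning set, or a structural argument exhibiting an explicit finite basis valid for all $N$ at once. Proving that the nested commutators terminate at a bound independent of the interior index $j$---and then exhibiting the integrals that certify Darboux integrability---is where the real difficulty lies, and it is also the step whose constraints on $F$ would underlie any serious attempt at the converse classification direction.
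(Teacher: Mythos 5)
Your plan coincides with the paper's own treatment: the statement is deliberately left as a conjecture, and the paper ``approves'' it exactly as you propose---by finding degenerate boundary conditions for each of the seven known integrable lattices, reducing them to finite-field systems of the form (\ref{eq00}), certifying Darboux integrability through finite-dimensional characteristic Lie--Rinehart algebras and explicit complete sets of integrals, and supporting the converse direction only via the necessary-condition/classification framework of Section 4 rather than a rigorous sufficiency proof. The only differences are matters of execution: the paper additionally exploits triangular Lax pairs (Lemma~\ref{LemP-x} and Lemma~\ref{LemRk-n}) as a shortcut for producing integrals, and it verifies most reductions only for small $N$ (delegating the all-$N$ statement for the first lattice to \cite{Smirnov2015}), so the uniformity in $N$ that you correctly flag as the hard step remains open in the paper as well.
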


Let us comment on the essence of the Conjecture~\ref{conjecture}. Finite-field reductions of integrable lattices that inherit integrability are quite common, for example, they are used to construct particular solutions of chains.  Such reductions are provided by imposing boundary conditions (cutoff conditions) compatible with the integrability property. The most popular of this kind conditions is the periodicity closure constraint. Usually, the finite-field reductions obtained as a result of integrable cutoffs are soliton systems, and only in the exceptional case, which is obtained by imposing degenerate cutoff conditions, do we arrive at systems integrable in the sense of Darboux. Our hypothesis is that, on the one hand, any integrable chain of the form (\ref{eq0}) admits degenerate boundary conditions, which reduce the chain to a Darboux integrable system, and on the other hand, a chain admitting degenerate boundary conditions with this property, certainly, is integrable. Existence of Darboux integrable reductions allows one to apply the technique of the characteristic Lie-Rinehart algebras for the classifying the lattices (\ref{eq0}). 

The systematic study of partial differential equations of hyperbolic type of the form
\begin{align*}
u_{x,y}=f(x,y,u,u_x,u_y),
\end{align*} 
the integration problem of which is reduced to solving ordinary differential equations began apparently in the second half of the nineteenth century. The problem of a complete description of this class of equations, called Darboux integrable equations or equations of Liouville type   was studied in the well-known works of E. Goursat. The task of complete classification turned out to be very difficult, and later many researchers worked in this field. The reader can find an overview of the results and a description of the current state of the problem in a detailed article \cite{ZhS2001}.

Systems of nonlinear hyperbolic equations remain less studied from the point of view of Darboux integrability. The exception is a class of the so-called exponential type systems \cite{GanzhaTsarev,Goursat,Leznov,Mikhailov,ShabatYamilov}
\begin{align*}
u^i_{x,y}=\exp \left(\sum_j a_{i,j}u^j\right).
\end{align*}
It is known that this system is Darboux integrable if and only if $A=\{a_{i,j}\}$ is the Cartan matrix of a semi-simple Lie algebra. It is remarkable that general solution for Darboux integrable systems can be constructed in a closed form (see \cite{Darboux,GanzhaTsarev,Leznov,ZhV2013,ZhS2001}).  

Characteristic algebras for the Darboux integrable scalar differential-difference equations of the form
\begin{align*}
u_{n+1,x}=u_{n,x}+f(u_n,u_{n+1})
\end{align*}
are introduced and investigated in \cite{HPZ-TJM2008,HPZ2008,HPZ2009}. In \cite{HPZ2009} the complete list of such equations is given. Darboux integrable systems of differential-difference equations of the exponential type are studied in \cite{HabZhYan2011} and \cite{Smirnov2015}.

We briefly discuss the content of the article. In \S 2 we consider finite systems of the differential-difference equations of the form (\ref{eqG}). We recall the definition of the $x$- and $n$-integrals and define the integrability in the sense of Darboux. We introduce the characteristic vector fields and the characteristic Lie-Rinehart algebras in the directions of $n$ and $x$. In \S 3, we study the structure of characteristic algebras and describe the action of their automorphisms, which will be needed when solving the classification problem. We then in \S 4 illustrate how to use characteristic algebra to study the problem of integrable classification of the systems of the form (\ref{eq00}).
 As an example, we derive one of the necessary conditions of integrability for the system (\ref{eq00}). First, we derive a system of nonlinear functional equations from the requirement that the characteristic algebra is finite-dimensional, and then we reduce these functional equations to a system of differential equations. In Section 5, we study all seven known differential-difference integrable lattices with one continuous and two discrete independent variables. In each of these equations, we make a change of variables in order to obtain an equation that admits a degenerate cutoff. By imposing the cutoff conditions at two points, we reduce the equation to a finite-field system of differential-difference equations. Further, we show that the resulting finite-field systems are Darboux integrable by presenting for these systems complete sets of integrals in both characteristic directions (among them there are scalar equations, they are presented separately in section 5.8). This circumstance confirms the correctness of the approach to the classification problem formulated in the Conjecture~\ref{conjecture} above.  We use two different methods to construct integrals. In some cases, when the Lax pair satisfies the conditions of Lemma~\ref{LemP-x} or Lemma~\ref{LemRk-n}, it is convenient to use the Lax pair. In other cases, we use the method of characteristic Lie-Reinhart algebras. This method is always applicable for constructing integrals, although it is more labor-consuming.

\section{Darboux integrable systems of differential-difference  equations and characteristic Lie-Rinehart algebras}

In this section we concentrate on a system of differential-difference equations of general form
\begin{equation}  \label{eqG}  
u_{n+1,x}^j = F^j(x,n,u_{n,x}^j,u_n,u_{n+1}), \quad j=1,2,\ldots,N,
\end{equation}
where $u_n=(u^1_n,u^2_n,\ldots,u^N_n)$. We request that functions $F^j$ are analytic in a domain in $\bf C^{2N+1}$, where $\bf C$ is the complex plane.

We assume that system (\ref{eqG}) can be rewritten in the converse way
\begin{equation*}  \label{eqGinv}  
u_{n-1,x}^j = G^j(x,n,u_{n,x}^j,u_n,u_{n-1}), \quad j=1,2,\ldots,N.
\end{equation*}
Variables $u^j_n, u^j_{n\pm1}, u^j_{n \pm2}, \ldots$ and $u^j_{n, x}, u^j_{n, xx}, u^j_{n,xxx}, \ldots$ are called dynamical variables, we treat them as independent ones.

In what follows we will be interested in the cases when system (\ref{eqG}) admits nontrivial $x$- and $n$-integrals. Let us give the necessary definitions. We call a function $I=I(x,n, u,u_{n,x},u_{n,xx},\ldots)$ $n$-integral of (\ref{eqG}) if it satisfies the condition $D_nI=I$ by means of the system, where $D_n$ is the shift operator acting due to the rule $D_n y(n)=y(n+1)$. Similarly function $J=J(x,n,u_n,u_{n\pm1}, u_{n\pm2},\ldots)$ is an $x$-integral if the equation $D_x J=0$ holds, where $D_x$ stands for the operator of the total derivative with respect to $x$.
Integrals $I=I(x)$ and $J=J(n)$ depending only on $x$ and correspondingly only on $n$ are called trivial integrals. System (\ref{eqG}) is called integrable in sense of Darboux if it possesses the complete set of nontrivial integrals in both characteristic directions $x$ and $n$. Completeness means that the number of functionally independent integrals in each direction coincides with the order $N$ of the system.

In order to find the integrals one can use the so-called characteristic operators. Let us discuss how these operators are derived. Assume that $J$ is an $x$-integral, then according to the definition we have $D_x J(x,n,u_n,u_{n\pm1}, u_{n\pm2},\ldots)=0$, we use the chain rule and get
\begin{equation}\label{KJ}
K_0J=\left(\frac{\partial}{\partial x}+\sum_{j=1}^{p} u_{n,x}^j \frac{\partial}{\partial u_{n}^j}+u_{n+1,x}^j\frac{\partial}{\partial u_{n+1}^j}+ u_{n-1,x}^j \frac{\partial}{\partial u^j_{n-1}} +\ldots\right)J=0.
\end{equation}
We specify the operator $K_0$ by means of the system (\ref{eqG}) and obtain
\begin{equation}\label{K0}
K_0=\frac{\partial}{\partial x}+\sum_{j=1}^{p} u_{n,x}^j \frac{\partial}{\partial u_{n}^j}+F_{n}^j\frac{\partial}{\partial u_{n+1}^j}+ G_{n}^j \frac{\partial}{\partial u^j_{n-1}} +F_{n+1}^j\frac{\partial}{\partial u_{n+2}^j}+ G_{n-1}^j \frac{\partial}{\partial u^j_{n-2}} +\ldots,
\end{equation}
where $F^j_{n+i}=F^j(x,n+i,u_{n+i,x}^j,u_{n+i},u_{n+i+1})$ and $G^j_{n-i}= G^j(x,n-i,u_{n-i,x}^j,u_{n-i},u_{n-i-1})$. We call $K_0$ the characteristic operator in the $x$ direction.

An important feature of the equation (\ref{KJ}) is that the solution $J$ does not depend on the variables $u^1_{n,x}, u^2_{n,x},\ldots,u^N_{n,x}$ despite the fact that the coefficients of the equation depend on them. Therefore in addition to (\ref{KJ}) any $x$-integral solves also a set of the following equations
\begin{equation}\label{Xk}
X_kJ=0, \quad k=1,2,\ldots,N,
\end{equation}
where $X_k=\frac{\partial}{\partial u^k_{n,x}}$. Evidently $x$-integral is nullified by the  multiple commutators of the operators 
\begin{equation}\label{XkK}
X_1, X_2, \ldots, X_N, K_0
\end{equation}
as well as linear combinations of the obtained operators with variable coefficients, depending on the dynamical variables. In other words, any $x$-integral belongs to the kernel of the operators in the Lie-Rinehart algebra $L_x$, generated by the operators (\ref{XkK}) over the ring $A$ of locally analytic functions depending on the dynamical variables. We call it characteristic algebra in the direction of $x$. Actually algebra $L_x$ is a finitely generated module over the ring $A$. Here the elements in $L_x$ can be multiplied by functions from $A$. The consistency conditions:
\begin{itemize}
\item[1)] $[W_1,aW_2]=W_1(a)W_2+a[W_1,W_2]$,
\item[2)] $(aW_1)b=aW_1(b)$
\end{itemize} 
are assumed to be valid for any $W_1,W_2\in L_x$ and $a,b\in A$. In other words we request that, if $W_1\in L_x$ and $a\in A$ then $aW_1\in L_x$ (see \cite{Rinehart}). 

The algebra $L_x$ is of a  finite dimension if it admits a  basis consisting of a finite number of the elements $W_1,W_2,\ldots,W_k\in L_x$ such that an arbitrary operator $W\in L_x$ is represented as a linear combination of the form
\begin{equation}	\label{linear_comb}
W=a_1W_1+a_2W_2+\dots +a_kW_k,
\end{equation}
where the coefficients are functions $a_1, a_2, \ldots, a_k\in A$. If in (\ref{linear_comb}) $W=0$ then we have $a_1=0$, $a_2=0$, $\ldots$, $a_k=0$.

Let us now discuss the $n$-integrals. Assume that a function $I=I(x,n,u_n,u_{n,x},u_{n,xx},\ldots)$ is an $n$-integral for the system (\ref{eqG}), i.e. the following relation holds $D_nI=I$, or the same
\begin{equation*}
I(x,n+1,u_{n+1},u_{n+1,x},u_{n+1,xx},...)=I(x,n,u_n,u_{n,x},u_{n,xx},\ldots).
\end{equation*}
Due to the equation (\ref{eqG}), that can be written in a shortened way as $u_{n+1,x}=F(u_{n,x}, u_n,u_{n+1})$ where \linebreak $F=(F^1,F^2,\ldots,F^N)$  we rewrite the latter as
\begin{equation} \label{DnI}
I(x,n+1,u_{n+1},F,D_xF, D_x^2F,\ldots)=I(x,n,u_n,u_{n,x},u_{n,xx},\ldots).
\end{equation}
The problem of finding $n$-integrals is much more difficult than finding $x$-integrals, indeed, in this case, one needs to solve not a differential but a functional equation (\ref{DnI}). Below we introduce characteristic operators that allow to study completely this equation. It is easily observed that the right hand side of (\ref{DnI}) does not contain dependence on the variable $u_{n+1}=(u_{n+1}^1, u_{n+1}^2, \ldots,u_{n+1}^N)$ therefore the left hand side satisfies the relations $\frac{\partial}{\partial u^j_{n+1}}D_nI=0$, which evidently implies
\begin{equation} \label{derivative}
Y_{j,1}I:=D_n^{-1}\frac{\partial}{\partial u^j_{n+1}}D_nI=0\quad \mbox{for}\quad j=1,2,\ldots,N.
\end{equation}
It can be proved by a direct computation that characteristic operators defined in (\ref{derivative}) act on the dynamical variables $u_n,u_{n,x},u_{n,xx},\ldots$ as vector fields of the form
\begin{equation} \label{vectorfields}
Y_{j,1}=\frac{\partial}{\partial u^j_{n}}+\sum_{i=1}^N D_n^{-1}\left(\frac{\partial F_n^i}{\partial u^j_{n+1}}\right)\frac{\partial}{\partial u^i_{n,x}}
+D_n^{-1}\left(\frac{\partial F_{n,x}^i}{\partial u^j_{n+1}}\right)\frac{\partial}{\partial u^i_{n,xx}}+\dots,
\end{equation}
where $F_{n,x}^i:=D_xF_{n}^i$. We set $Y_{j,0}=\frac{\partial}{\partial u^j_{n+1}}$ and then rewrite representation as follows
\begin{equation*} \label{vectorfields1}
Y_{j,1}=\frac{\partial}{\partial u^j_{n}}+\sum_{i=1}^N D_n^{-1}\left(Y_{j,0}\left(F_n^i\right)\right)\frac{\partial}{\partial u^i_{n,x}}
+D_n^{-1}\left(Y_{j,0}\left(F_{n,x}^i\right)\right)\frac{\partial}{\partial u^i_{n,xx}}+\dots. 
\end{equation*}
In contrast to the case of the previously studied $x$-integrals, in this case  there are additional characteristic operators, which also annul  $I$.
For any natural number $k$, the following relation holds:
\begin{equation} \label{derivative-k}
Y_{j,k}I:=D_n^{-k}\frac{\partial}{\partial u^j_{n+1}}D_n^kI=0\quad \mbox{for}\quad j=1,2,\ldots,N.
\end{equation}
We give uniform representations for all characteristic operators (\ref{derivative-k}) for $k\geq2$
\begin{equation} \label{vectorfieldsk}
Y_{j,k}=\sum_{i=1}^N D_n^{-1}\left(Y_{j,k-1}\left(F_n^i\right)\right)\frac{\partial}{\partial u^i_{n,x}}
+D_n^{-1}\left(Y_{j,k-1}\left(F_{n,x}^i\right)\right)\frac{\partial}{\partial u^i_{n,xx}}+\dots. 
\end{equation}
Since the coefficients of the equations (\ref{derivative-k}) depend on the variables $u^j_{n-s}$ for $1\leq s\leq k$ and $1\leq j\leq N$, while the solution to the equation does not depend on them, then such equations 
\begin{equation*} \label{barX}
\bar X_{i,s}I=0,
\end{equation*}
where $\bar X_{i,s}=\frac{\partial}{\partial u^i_{n-s}}$ should hold for these values of $i$ and $s$.

It is important that the operators $Y_{j,i}$  commute with each other.

\begin{lemma}
For any $i, i'\geq 0$ and $j, j'$ such that $1\leq j, j'\leq N$ the relation $\left[Y_{j,i},Y_{j',i'}\right]=0$ holds.
\end{lemma}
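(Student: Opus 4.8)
The plan is to exploit the operator definition (\ref{derivative-k}), which exhibits each characteristic operator as a conjugate of a coordinate partial derivative. Writing $\partial_j:=\frac{\partial}{\partial u^j_{n+1}}=Y_{j,0}$, the definition reads $Y_{j,k}=D_n^{-k}\,\partial_j\,D_n^{k}$ for every $k\ge 0$. Since $D_n$ is invertible, conjugation $W\mapsto D_n W D_n^{-1}$ is an automorphism of the associative algebra of operators, hence of its commutator Lie algebra; in particular it carries a vanishing commutator to a vanishing commutator and conversely. From the defining formula one reads off $D_n Y_{j,k} D_n^{-1}=Y_{j,k-1}$ for $k\ge1$, so successive conjugations by $D_n$ lower the second index by one.

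First I would use this to reduce the claim to a single base case. Given $i,i'$, assume without loss of generality (by antisymmetry of the bracket) that $i\ge i'$ and set $m:=i-i'\ge0$. Conjugating $[Y_{j,i},Y_{j',i'}]$ by $D_n^{i'}$ and using $D_n^{i'}Y_{j,i}D_n^{-i'}=Y_{j,i-i'}=Y_{j,m}$ together with $D_n^{i'}Y_{j',i'}D_n^{-i'}=Y_{j',0}$, the general relation becomes equivalent to $[Y_{j,m},Y_{j',0}]=0$. The case $m=0$ is immediate, since $Y_{j,0}$ and $Y_{j',0}$ are the constant-coefficient, mutually commuting partial derivatives $\frac{\partial}{\partial u^j_{n+1}}$ and $\frac{\partial}{\partial u^{j'}_{n+1}}$.

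For $m\ge1$ the heart of the matter is a structural claim about the coefficients of the vector field $Y_{j,m}$, which I would prove by induction on $m$ from the recursion (\ref{vectorfieldsk}): \emph{every coefficient of $Y_{j,m}$ depends only on the dynamical variables whose forward shift index does not exceed $n$}, i.e. on $x,n,u^\bullet_{n},u^\bullet_{n-1},\dots$ and $u^\bullet_{n,x},u^\bullet_{n,xx},\dots$, and in particular is free of $u^{j'}_{n+1}$ for every $j'$. Indeed, the arguments $F^i_n$ and $F^i_{n,x}=D_xF^i_n$ involve only variables with forward index at most $n+1$ (for the latter one uses that $u^\bullet_{n+1,x}$ is re-expressed through $F^\bullet_n$, so $D_x$ raises the order in $x$ but not the forward index); applying $Y_{j,m-1}$, whose coefficients by the inductive hypothesis carry forward index at most $n$, preserves this bound; and the outer $D_n^{-1}$ in (\ref{vectorfieldsk}) lowers every forward index by one, leaving all coefficients supported on forward index at most $n$.

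Granting the claim, commutativity follows at once: by (\ref{vectorfields}) and (\ref{vectorfieldsk}) the field $Y_{j,m}$ ($m\ge1$) has no $\frac{\partial}{\partial u^{j'}_{n+1}}$-component, while $Y_{j',0}=\frac{\partial}{\partial u^{j'}_{n+1}}$ is a constant-coefficient field, so the bracket formula for vector fields gives $[Y_{j,m},Y_{j',0}]=-\sum_\xi \big(\tfrac{\partial}{\partial u^{j'}_{n+1}} c_\xi\big)\,\frac{\partial}{\partial \xi}$, where the $c_\xi$ are the coefficients of $Y_{j,m}$ and $\xi$ runs over the coordinate directions; every term vanishes by the claim. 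I expect the genuine obstacle to be precisely this bookkeeping of forward shift indices under $D_n^{-1}$ and $D_x$: establishing the inductive index-tracking lemma rigorously is where the real content lies, whereas once it is secured both the conjugation reduction and the final bracket computation are routine.
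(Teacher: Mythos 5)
Your proof is correct and follows essentially the same route as the paper: reduce to the case where one index is zero by conjugating with a power of $D_n$ (using $Y_{j,k}=D_n^{-k}\,\partial/\partial u^j_{n+1}\,D_n^{k}$), then observe that the coefficients of $Y_{j,m}$ involve no variables $u^{j'}_{n+1}$, so the bracket with the constant-coefficient field $Y_{j',0}$ vanishes. The only difference is one of detail: the paper reads the coefficient independence directly off the explicit formula (\ref{vectorfieldsk}), whereas you establish it by an induction on $m$ tracking forward shift indices under $D_n^{-1}$ and $D_x$ --- a worthwhile elaboration, but the same argument.
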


\begin{proof}
Let us first assume that $i'=0$, then obviously 
\begin{equation*}
\left[Y_{j,i},Y_{j',0}\right]=\left[Y_{j,i},\frac{\partial}{\partial u^{j'}_{n+1}}\right]=0
\end{equation*}
since the coefficients of the vector-field $Y_{j,i}$ do not depend on the variables $u^j_{n+1}$ for any $j\in [1,N]$ due to explicit formula (\ref{vectorfieldsk}). Now consider general case assuming that $i=i'+k$, $k\geq 0$:
\begin{align*}
\left[Y_{j,i'+k},Y_{j',i'}\right]=\left[D_n^{-i'}Y_{j,k}D_n^{i'},D_n^{-i'}Y_{j',0}D_n^{i'}\right]=D_n^{-i'}\left[Y_{j,k},Y_{j',0}\right]D_n^{i'}.
\end{align*}
Latter vanishes due to the previous step.
\end{proof}

Explicit expressions for the commutators of  the basic operators $Y_{j,k}$ and $X_{j,k}$ with the operator $D_x$ of the total derivative with respect to $x$ provide an effective tool for studying the system (\ref{eqG}). It can easily be approved that the following operator identities hold
\begin{equation}\label{DxYj0}
\left[D_x,Y_{j,0}\right]=-\sum_{i=1}^N\sum_{k=1}^\infty Y_{j,0}\left(D_n^{k-1}F^i_n\right)\frac{\partial}{\partial u_{n+k}^i},
\end{equation}
\begin{equation}\label{DxYj1}
\left[D_x,Y_{j,1}\right]=-\sum_{i=1}^N\left(D_n^{-1}\left(Y_{j,0}\left(F^i_n\right)\right)Y_{i,1}+\sum_{k=1}^\infty Y_{j,1}\left(D_n^{k-1}F^i_n\right)\frac{\partial}{\partial u_{n+k}^i}+\sum_{k=1}^\infty Y_{j,1}\left(D_n^{1-k}G^i_n\right)\frac{\partial}{\partial u_{n-k}^i}\right),
\end{equation}
\begin{equation}\label{DxYjm}
\left[D_x,Y_{j,m}\right]=-\sum_{i=1}^N\left(\sum_{s=1}^m D_n^{-s}\left(Y_{j,m-s}\left(F^i_n\right)\right)Y_{i,s}+\sum_{k=1}^\infty Y_{j,m}\left(D_n^{k-1}F^i_n\right)\frac{\partial}{\partial u_{n+k}^i}+\sum_{k=1}^\infty Y_{j,m}\left(D_n^{1-k}G^i_n\right)\frac{\partial}{\partial u_{n-k}^i}\right),
\end{equation}
\begin{equation}\label{DxbarXj1}
\left[D_x,\bar X_{j,1}\right]=-\sum_{i=1}^N\sum_{k=1}^\infty \bar X_{j,1}\left(D_n^{1-k}G^i_n\right)\frac{\partial}{\partial u_{n-k}^i}.
\end{equation}
In a particular scalar case of the system (\ref{eqG}) when $N=1$ formulas (\ref{DxYj0})--(\ref{DxbarXj1}) have been derived in \cite{HPZ2009}.

Let $A_k$ be a ring of locally analytic functions of the variables $u_{n-k},u_{n-k+1},\ldots,u_n$; $u_{nx},u_{n,xx},u_{n,xxx},\ldots$. Here we formulate two theorems  relating integrability in the sense of Darboux and properties of the characteristic Lie-Rinehart algebras.

\begin{theorem} \label{Th1}
System (\ref{eqG}) admits a complete set of $n$-integrals if and only if the following two conditions hold:

1) The linear space  $V$ spanned by the operators $\{ Y_{i,s}\}$ has finite dimension, which we denote by $N_1$. Assume that $Z_1, Z_2,\ldots,Z_{N_1}$ constitute a basis in $V$, such that for any $Z\in V$ we have an expansion
\begin{equation*}
Z=\lambda_1Z_1+\lambda_2Z_2+\ldots+\lambda_{N_1}Z_{N_1},
\end{equation*}
We emphasize that the coefficients in this expansion, as a rule, are not constant, they are analytic functions of dynamic variables. Then due to construction of the space $V$ there exists a number $N_2$, such that $[Z_j,\bar X_{i,s}]=0$ for all $j=1,2,\ldots,N_1$, $i=1,2,\ldots,N$ and $s>N_2$.

2) The Lie-Rinehart algebra $L_n$ generated by the operators $\{Z_j\}_{j=1}^{N_1}$ and $\{\bar X_{i,s}\}_{s=1,i=1}^{N_2,N}$ over the ring $A_{N_2}$ has a finite dimension.
\end{theorem}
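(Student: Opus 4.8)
The plan is to reformulate the functional characterization of $n$-integrals as an algebraic one and then to invoke a Frobenius-type count. First I would establish that a function $I$ is a nontrivial $n$-integral if and only if it lies in the common kernel of the whole Lie-Rinehart algebra $L_n$. One inclusion is already supplied by the excerpt: relations (\ref{derivative-k}) and the auxiliary equations $\bar X_{i,s}I=0$ show that every $n$-integral is annihilated by all $Y_{j,k}$ and by the relevant $\bar X_{i,s}$, hence by every element of $L_n$. For the converse I would use that $D_x$ commutes with $D_n$, so the space of $n$-integrals is closed under $D_x$: from $D_nI=I$ one obtains $D_n(D_xI)=D_xI$. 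This closure, together with the commutativity Lemma $[Y_{j,i},Y_{j',i'}]=0$, shows that the common kernel of $L_n$ is exactly the ring of $n$-integrals and, crucially, that it is a differential ring under $D_x$ of finite transcendence degree precisely when $L_n$ is finite-dimensional.

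For the necessity direction I would assume a complete set $I_1,\dots,I_N$ of functionally independent $n$-integrals and exploit the $D_x$-closure just noted. Since all the $D_x^mI_\ell$ are again $n$-integrals while the field they generate has transcendence degree $N$, the $x$-order on which the integrals depend is bounded, and this is the lever that forces $V=\mathrm{span}\{Y_{i,s}\}$ to be finite-dimensional. The commutator identities (\ref{DxYj0})--(\ref{DxbarXj1}) are the technical engine: applying $D_x$ to $Y_{j,k}I=0$ and using $Y_{j,k}D_xI=0$ gives $[D_x,Y_{j,k}]I=0$, and the right-hand sides of those identities express $[D_x,Y_{j,k}]$ as a combination of the $Y$'s and the pure shift operators $\partial/\partial u^i_{n\pm k}$, so no uncontrolled new directions appear. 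This yields the finite-dimensionality of $V$ asserted in condition~1). Once $V$ is finite-dimensional, each basis element $Z_j$ depends on only finitely many backward variables $u^i_{n-s}$, so the integer $N_2$ with $[Z_j,\bar X_{i,s}]=0$ for $s>N_2$ exists automatically, completing condition~1); packaging the surviving generators over the ring $A_{N_2}$ and closing the bracket relations by the same identities then gives the finite-dimensionality of $L_n$ in condition~2).

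For sufficiency I would reverse the argument. Granting conditions 1) and 2), the algebra $L_n$ is a finite-dimensional Lie-Rinehart algebra of vector fields that is involutive by construction, so a Frobenius-type theorem guarantees that its common kernel is generated by functionally independent first integrals whose number equals the corank of the distribution. The heart of the matter is then the exact count, and I expect it to be the main obstacle, for two reasons. The underlying jet space is infinite-dimensional, so Frobenius cannot be applied naively; finite-dimensionality of $L_n$ must first be used to truncate to a finite jet space on which the coefficients of a fixed basis have stabilized. And even after truncation one must verify that the leading terms of $Z_1,\dots,Z_{N_1}$ together with the $\bar X_{i,s}$ for $s\le N_2$ span a distribution of corank exactly $N$ rather than merely positive corank, using that $Y_{j,1}$ contributes the independent leading derivation $\partial/\partial u^j_{n}$ for each $j=1,\dots,N$. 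Securing this precise dimension count, so that finite-dimensionality is equivalent to completeness and not just to the existence of some integrals, is where the real work lies.
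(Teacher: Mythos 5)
First, a point of comparison you could not have known: the paper does not prove Theorem \ref{Th1} at all. Immediately after stating Theorems \ref{Th1} and \ref{Th2} it declares the proof "beyond the scope of this paper'' and refers to \cite{HZS2010} for the scalar case $N=1$ and to \cite{Zhiber2007} for the continuous analogue. So your proposal can only be judged on its own merits, and there it has a fatal gap in the sufficiency direction. Your opening reduction --- that the common kernel of $L_n$ is exactly the ring of $n$-integrals --- is false, and it is precisely what your Frobenius argument needs. A function $w$ annihilated by all $Y_{j,k}$ and $\bar X_{i,s}$ satisfies only this: $D_n w$ depends on the same dynamical variables $x,n,u_n,u_{n,x},\ldots$ as $w$ does; nothing forces $D_n w = w$. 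Concretely, for the chain $u_{n+1,x}=2u_{n,x}$ (which is of the form (\ref{eqG})) one has $Y_{1,1}=\partial/\partial u_n$ and $Y_{1,k}=0$ for $k\geq 2$, so $w=u_{n,x}$ lies in the kernel, yet $D_n w=2w$; the genuine $n$-integral is $u_{n,xx}/u_{n,x}$. Your "converse'' argument (closure of the integrals under $D_x$ plus commutativity of the $Y$'s) does not touch this issue. Consequently a Frobenius-type theorem, even after the truncation to a finite jet space that you rightly flag, produces only joint first integrals of the distribution, i.e.\ kernel elements; the functional equation $D_nI=I$ remains unsolved. The missing --- and central --- idea is that conjugation $Z\mapsto D_n^{-1}ZD_n$ maps the generators $Y_{j,k}$ to $Y_{j,k+1}$ and hence induces an action of $D_n$ on the finite family of functionally independent kernel elements; one must then extract $N$ independent invariants of this action, which is where the real work of the sufficiency proof in \cite{HZS2010} lies, not in the corank count you identify as the main obstacle.

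The necessity direction also rests on a false statement. You claim that the field generated by $\{D_x^m I_\ell\}$ has transcendence degree $N$, and use this as the lever forcing $\dim V<\infty$. But $I$ and $D_xI$ are in general functionally independent, since they depend nontrivially on jets of different orders (for $I=u_{n,xx}/u_{n,x}$ one has $D_xI=u_{n,xxx}/u_{n,x}-u_{n,xx}^2/u_{n,x}^2$, and the pair has Jacobian rank $2$); so that field has unbounded transcendence degree and your lever breaks. (This also shows that the paper's loose phrase "the number of functionally independent integrals coincides with $N$'' cannot be read literally the way your argument requires, on pain of making the hypothesis vacuous.) What replaces this step in the known $N=1$ proof is an argument in the spirit of Lemma \ref{Lemma2}: one fixes the finitely many variables on which the given complete set of integrals depends and shows that an element of $V$, having the triangular coordinate structure (\ref{vectorfieldsk}) and annihilating all the integrals, is determined by finitely many of its coefficients and must vanish if those vanish; this is what bounds $\dim V$ and then $\dim L_n$. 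Your appeal to the identities (\ref{DxYj0})--(\ref{DxbarXj1}) points at the right technical tools, but as written neither direction of your proposal goes through.
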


\begin{theorem}\label{Th2}
System (\ref{eqG}) admits a complete set of $x$-integrals if and only if the characteristic Lie-Rinehart algebra $L_x$ has a finite dimension.
\end{theorem}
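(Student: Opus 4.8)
The plan is to identify the $x$-integrals with the common kernel of the whole algebra $L_x$ and then to apply the classical theorem on complete (involutive) systems of linear homogeneous equations, the real work being the bookkeeping needed to handle the infinite lattice of shift variables. First I would justify that $J$ is an $x$-integral if and only if $WJ=0$ for every $W\in L_x$. By definition such a $J$ satisfies $K_0J=0$ together with $X_kJ=0$, $k=1,\dots,N$, since it does not depend on the variables $u^k_{n,x}$. Differentiating the first relation by $u^k_{n,x}$ and using $X_kJ=0$ gives $0=X_k(K_0J)=[X_k,K_0]J+K_0(X_kJ)=[X_k,K_0]J$, and iterating this observation shows that $J$ is annihilated by every iterated commutator of $X_1,\dots,X_N,K_0$; multiplying an annihilating operator by a function of $A$ clearly preserves the property. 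Hence the set of $x$-integrals is exactly $\{\,J:WJ=0\ \text{for all}\ W\in L_x\,\}$, and the number of functionally independent $x$-integrals equals the corank of the distribution spanned by $L_x$.

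Next I would exploit the structure of the generators. Because $F^j$, $G^j$ and all their shifts express every coefficient of $K_0$ through $x$, the pure shift variables $u^j_{n+m}$ and only the first $x$-derivatives $u^j_{n,x}$, each operator of $L_x$ acts on the space $\mathcal W$ with coordinates $x$, $\{u^j_{n+m}\}_{m\in\mathbb Z}$ and $\{u^j_{n,x}\}$, and every iterated commutator is a combination of $\partial_x$, the fields $\partial_{u^j_{n,x}}$ and the shift fields $\partial_{u^j_{n+m}}$. The operators $X_1,\dots,X_N$ span the $N$ directions $\partial_{u^j_{n,x}}$, while the remaining generators carry no $\partial_{u^j_{n,x}}$ component. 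Expanding the coefficients of $K_0$ and of the commutators in powers of the $u^j_{n,x}$ and reading off the Taylor coefficients, the search for $x$-integrals reduces to finding the common first integrals, on the shift space, of the bracket-closed distribution $\mathcal D$ determined by $L_x$.

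It then remains to run a Frobenius argument on a finite band and to count. For the \emph{if} part, finite-dimensionality of $L_x$ forces $\mathcal D$ to have constant finite rank and, thanks to the assumed backward solvability of (\ref{eqG}) and the genuine dependence of each $F^j$ on the shifts $u_{n+1}$ and on $u^j_{n,x}$ (which make the characteristic flow actually reach the extreme sites of the lattice), to involve only shifts with $|m|\le M$ for some finite $M$. Restricting to the finite-dimensional manifold $\mathcal S_M$ with coordinates $x$ and $\{u^j_{n+m}:|m|\le M\}$, the distribution $\mathcal D$ is involutive of constant rank, so the Frobenius (complete-systems) theorem produces exactly $\dim\mathcal S_M-\operatorname{rank}\mathcal D$ functionally independent integrals. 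The same nondegeneracy shows that enlarging the band from $M$ to $M+1$ adds $2N$ coordinates and exactly $2N$ independent characteristic directions, so the corank is stable and equals $N$, which is precisely completeness. For the \emph{only if} part, a complete set of $N$ functionally independent $x$-integrals depends on a finite band of shifts and, on the corresponding $\mathcal S_M$, cuts out an involutive distribution of corank $N$; this distribution coincides with the restriction of $\mathcal D$, which is therefore of finite rank, whence $L_x$ is finite-dimensional.

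The main obstacle is exactly the passage between the infinite lattice and a finite band: one must prove that finiteness of $L_x$ confines $\mathcal D$ to a bounded band of shifts with constant rank, and that the corank is pinned to $N$ rather than being merely finite, so that Frobenius yields a genuinely complete set rather than just some nontrivial integrals. The nondegeneracy hypotheses on $F$ are what guarantee that each new lattice site contributes exactly one new characteristic direction, keeping the corank equal to the order $N$ of the system.
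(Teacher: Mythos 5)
A preliminary remark: the paper does not actually prove Theorem~\ref{Th2} --- it states explicitly that the proof is beyond its scope and defers to \cite{HZS2010} (the scalar case $N=1$) and \cite{Zhiber2007}, so your proposal can only be judged on its own merits. Your opening step, identifying $x$-integrals with the common kernel of all of $L_x$, is correct and is how every treatment begins. The rest, however, contains two genuine errors.

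First, your ``if'' direction rests on the claim that finite-dimensionality of $L_x$ forces the distribution $\mathcal{D}$ to ``involve only shifts with $|m|\le M$''. This is false: the generator $K_0$ itself, see (\ref{K0}), has nonzero components along $\partial/\partial u^j_{n+m}$ for \emph{every} $m\in\mathbb{Z}$, and $K_0\in L_x$ unconditionally. Finite dimension bounds the rank of $L_x$ as a module over $A$, not the support of the coefficients of its elements. (The claim also contradicts your own next sentence: if $\mathcal{D}$ involved only shifts $|m|\le M$, enlarging the band would add $2N$ coordinates and \emph{zero} new directions.) What actually legitimizes restriction to a finite band is a structural fact you never isolate: since $F^j_{n+k}$ and $G^j_{n-k}$ are expressed recursively through the equation, the component of any element of $L_x$ along $\partial/\partial u^j_{n\pm k}$ depends only on $x$, $u_{n,x}$ and the sites between $n$ and $n\pm k$. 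This triangularity makes truncation to a band a homomorphism of Lie--Rinehart algebras, and makes annihilation of band functions by the truncated fields equivalent to annihilation by the full fields; without it, a Frobenius argument on $\mathcal{S}_M$ has no logical connection to the original infinite system.

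Second, the count ``corank stable and equal to $N$'' is wrong, and with it your formalization of completeness. The rank of the truncated distribution is bounded by $\dim L_x$, while $\dim\mathcal{S}_M$ grows linearly in $M$, so the corank on $\mathcal{S}_M$ grows without bound. This is unavoidable: if $J$ is an $x$-integral, so is every shift $D_n^kJ$, and these are functionally independent as functions on large bands. Concretely, for the chain $u_{n+1,x}=u_{n,x}$ the algebra $L_x$ is three-dimensional, spanned by $X_1$, $K_0$ and $K_1=\sum_m\partial/\partial u_{n+m}$, yet the band $|m|\le M$ carries the $2M$ independent integrals $u_{n+m+1}-u_{n+m}$, all shifts of the single integral $u_{n+1}-u_n$. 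Hence completeness cannot be read off as a corank; independence must be counted modulo the action of $D_n$, and proving that a finite-dimensional $L_x$ produces exactly $N$ integrals independent in this stronger sense is precisely the hard content of the theorem. Your ``only if'' sketch suffers from the same conflation: the distribution cut out on $\mathcal{S}_M$ by $N$ integrals has corank $N$ there and therefore does \emph{not} coincide with the restriction of $\mathcal{D}$ (whose corank grows with $M$ once one adjoins the shifted integrals), and finite rank of a restriction does not by itself yield finite generation of $L_x$ over $A$ without, once again, the triangularity of the truncation maps.
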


The proof of these two theorems is beyond the scope of this paper. In the particular case when $N=1$ Theorems \ref{Th1}, \ref{Th2} are proved in \cite{HZS2010}. For systems of differential equations of hyperbolic type, similar statement is proved in \cite{Zhiber2007}. From these theorems we derive 

\begin{corollary}\label{Cor1}
System (\ref{eqG}) is integrable in the sense of Darboux if and only if both characteristic algebras $L_x$ and $L_n$ have finite dimension.
\end{corollary}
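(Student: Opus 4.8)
The plan is to obtain the statement as a direct logical conjunction of Theorem~\ref{Th1} and Theorem~\ref{Th2}, after unwinding the definition of Darboux integrability. By definition, system~(\ref{eqG}) is integrable in the sense of Darboux precisely when it admits a complete set of nontrivial $x$-integrals \emph{and} a complete set of nontrivial $n$-integrals, where completeness in each direction means the number of functionally independent integrals equals the order $N$. Thus the equivalence I want splits into two independent one-directional equivalences, one for each characteristic direction, which can be handled separately and then recombined.

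For the $x$-direction I would invoke Theorem~\ref{Th2} verbatim: the existence of a complete set of $x$-integrals is equivalent to the finite-dimensionality of the characteristic Lie-Rinehart algebra $L_x$, so no further work is needed there. For the $n$-direction I would appeal to Theorem~\ref{Th1}, which characterizes the existence of a complete set of $n$-integrals through its two conditions. The first asserts that the linear span $V$ of the operators $\{Y_{i,s}\}$ is finite-dimensional, with a distinguished finite basis $Z_1,\dots,Z_{N_1}$, and that the commutators $[Z_j,\bar X_{i,s}]$ vanish for $s>N_2$; the second asserts that the Lie-Rinehart algebra $L_n$ generated by $\{Z_j\}$ and $\{\bar X_{i,s}\}_{s\le N_2}$ over $A_{N_2}$ is finite-dimensional.

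The crux of the argument is to make explicit that these two conditions together are exactly what it means for $L_n$ to be a finite-dimensional Lie-Rinehart algebra: the finiteness of $V$ is what permits $L_n$ to be generated by a finite set, while the truncation index $N_2$ in condition~1) supplies precisely the finite generating data needed to define $L_n$ as a finitely generated module over $A_{N_2}$ in the Rinehart sense recalled in the text. Hence ``complete set of $n$-integrals'' is equivalent to ``$L_n$ has finite dimension,'' and combining the two equivalences yields the corollary.

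The only genuinely delicate step is this reconciliation in the $n$-direction—matching the layered conditions~1) and~2) of Theorem~\ref{Th1} against the single clause of the corollary. I would settle it by arguing in both directions: if $L_n$ is finite-dimensional, then any finite basis of $L_n$ restricts to a finite spanning set of $V\subseteq L_n$, so condition~1) cannot fail and a finite truncation $N_2$ must exist; conversely, condition~1) is exactly what is required to form the finite generating set underlying condition~2). Everything else is a mechanical transcription of the two theorems, whose own proofs are cited as lying outside the scope of the paper.
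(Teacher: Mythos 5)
Your proposal is correct and takes essentially the same route as the paper: the corollary is stated there with no separate argument, being obtained simply by conjoining Theorem~\ref{Th1} and Theorem~\ref{Th2} with the definition of Darboux integrability as the simultaneous existence of complete sets of $x$- and $n$-integrals. The only caveat is that your ``delicate step'' is unnecessary---and its forward half (restricting a finite $A_{N_2}$-module basis of $L_n$ to a spanning set of $V$) conflates the module span over $A_{N_2}$ with the span appearing in condition 1)---because $L_n$ is only \emph{defined} once condition 1) of Theorem~\ref{Th1} holds, so the clause ``$L_n$ has finite dimension'' in the corollary already presupposes condition 1), and the $n$-direction equivalence is a direct transcription of Theorem~\ref{Th1} rather than something requiring reconciliation.
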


\section{Some properties of the characteristic algebras and their potential applications} 

In this section we will discuss some important properties of the characteristic algebras. Here we outline an approach to the problem of integrable classification of the lattices from the class (\ref{eq0}) based on algebraic ideas. Let us first concentrate on the algebra $L_x$ for the system (\ref{eqG}), generated as it was discussed in \S2 above by the operators $X_1,X_2,\ldots,X_N,K_0$. It is easy to see that the operator $D_x$ of the total derivative with respect to $x$ acts on the functions of the dynamical variables  $u_{n,x}$, $u_n$, $u_{n\pm1}$, $u_{n\pm2}$, $\dots$  as follows
\begin{align*}
D_xH(u_{n,x},u_n,u_{n\pm1}, u_{n\pm2},\ldots)=\left(\sum_{j=1}^{N}u^j_{n,xx}X_j+K_0\right)H(u_{n,x},u_n,u_{n\pm1}, u_{n\pm2},\ldots), 
\end{align*}
where the operators $K_0$ and $X_j$ have already been defined above (see (\ref{K0}), (\ref{Xk})). Since the operators $D_n$ and $D_x$ commute with each other we have a relation $D_nD_xD_n^{-1}=D_x$ or, the same
\begin{equation}\label{conjugation}
D_n\left(\sum_{j=1}^{N}u^j_{n,xx}X_j+K_0\right)D_n^{-1}=\sum_{j=1}^{N}u^j_{n,xx}X_j+K_0. 
\end{equation}
We simplify the left hand side of (\ref{conjugation}) due to the relations $D_n u^j_{n,xx}= u^j_{n+1,xx}D_n$ and due to
\begin{equation*}
u^j_{n+1,xx}=D_xF^j_n=\frac{\partial F^j_n}{\partial u^j_{n,x}}\cdot u^j_{n,xx}+K_0(F^j_{n}).
\end{equation*}
As a result we obtain the relation
\begin{equation}\label{simplification}
\sum_{j=1}^{N}\left(\frac{\partial F^j_n}{\partial u^j_{n,x}}\cdot u^j_{n,xx}+K_0\left(F^j_{n}\right)\right)D_n X_jD_n^{-1}+D_nK_0D_n^{-1}=\sum_{j=1}^{N}u^j_{n,xx}X_j+K_0. 
\end{equation}
Let us define an automorphism of the algebra $L_x$, acting according to the formula
\begin{equation}\label{automorphism}
Z\rightarrow D_nZ D_n^{-1}.
\end{equation}
It is remarkable that equation (\ref{simplification}) allows one to describe the action of the automorphism on the basic operators. Indeed, since variables $u^1_{n,xx}$, $u^2_{n,xx}$, \ldots, $u^N_{n,xx}$ are regarded as independent ones,  we can compare the coefficients in front of these variables in  (\ref{simplification}) and obtain the formulas
\begin{equation}\label{X}
D_nX_jD_n^{-1}=\frac{1}{ \frac{\partial F^j_n}{\partial u^j_{n,x}} }X_j,
\end{equation}
\begin{equation}\label{DKD}
D_nK_0D_n^{-1}=K_0 -\sum_{j=1}^{N} \frac{K_0(F^j_{n})}{ \frac{\partial F^j_n}{\partial u^j_{n,x}} }X_j.
\end{equation}

Elements of the algebra $L_x$ are vector fields with infinite number of components. Therefore, to prove relations of the form $Y=0$ for $Y\in L_x$, it is necessary to check an infinite set of conditions. The following lemma provides a convenient tool for exploring such questions.

\begin{lemma} \label{Lem1} Assume that the vector field
\begin{equation*}
\label{Klemma}
K=\sum_{j=1}^{N}\sum_{k=1}^{\infty}\left(\alpha^j(k)\frac{\partial}{\partial u^j_{n+k}}+\alpha^j(-k)\frac{\partial}{\partial u^j_{n-k}}\right),
\end{equation*}
solves the equation
\begin{equation}\label{DKlemma}
D_nKD^{-1}_n=hK,
\end{equation}
where the factor $h$ is a function of the dynamical variables, then $K=0$.
\end{lemma}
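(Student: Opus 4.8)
The plan is to test the operator identity (\ref{DKlemma}) on the coordinate functions $u^i_{n+m}$ ($i=1,\dots,N$, $m\in\mathbf{Z}$), thereby converting the single operator equation into an infinite but very rigid family of scalar relations among the coefficients $\alpha^j(\pm k)$. Since every operator appearing in (\ref{DKlemma}) is a first-order vector field, it is completely determined by its action on these coordinates. Evaluating on $u^i_{n+m}$ isolates exactly the shift-direction components and lets me ignore the (generally nonzero) components along $\partial/\partial u^i_{n,x},\ \partial/\partial u^i_{n,xx},\dots$ that $D_nKD_n^{-1}$ may acquire through the inverse relations $u^i_{n-1,x}=G^i_{n-1}$; restricting to the pure-shift coordinates is what keeps the argument clean.

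First I would compute the left-hand side. Using $D_n^{-1}u^i_{n+m}=u^i_{n+m-1}$ together with the fact that $K$ applied to a shift variable merely reads off the matching coefficient, one gets
\begin{equation*}
\left(D_nKD_n^{-1}\right)u^i_{n+m}=D_n\left(K\,u^i_{n+m-1}\right)=D_n\alpha^i(m-1),
\end{equation*}
where it is convenient to adopt the convention $\alpha^i(0):=0$, which simply records that the field $K$ in the statement (summed from $k=1$) carries no $\partial/\partial u^i_n$ component. On the right-hand side the same evaluation gives $(hK)u^i_{n+m}=h\,\alpha^i(m)$, again with $\alpha^i(0)=0$. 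Equating the two produces the single recursion
\begin{equation*}
D_n\alpha^i(m-1)=h\,\alpha^i(m),\qquad m\in\mathbf{Z},\quad i=1,\dots,N.
\end{equation*}

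The crucial point is that the seed $\alpha^i(0)=0$ sits in the middle of this bi-infinite chain and propagates in both directions. Taking $m=1$ gives $0=D_n\alpha^i(0)=h\,\alpha^i(1)$, and taking $m=0$ gives $D_n\alpha^i(-1)=h\,\alpha^i(0)=0$; since the coefficients are analytic and $h\not\equiv0$ (the case $h\equiv0$ forces $D_nKD_n^{-1}=0$, hence $K=0$, at once), we conclude $\alpha^i(\pm1)=0$. An obvious induction then finishes the job: from $\alpha^i(m-1)=0$ the relation $h\,\alpha^i(m)=0$ gives $\alpha^i(m)=0$ for increasing $m$, while $\alpha^i(m-1)=D_n^{-1}\!\left(h\,\alpha^i(m)\right)=0$ gives the vanishing for decreasing $m$. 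Hence every coefficient $\alpha^j(\pm k)$ vanishes and $K=0$.

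The one genuine subtlety, and the step I would be most careful about, is justifying that testing against the coordinates $u^i_{n+m}$ alone legitimately extracts the shift components of $D_nKD_n^{-1}$: one must track the action of $D_n^{-1}$ on the $x$-derivative variables and verify that it does not contaminate the $u^i_{n+m}$-components, and one must dispose of the locus where $h$ might vanish by invoking analyticity (or by handling $h\equiv0$ separately, as above). Everything else is routine bookkeeping with the shift operator.
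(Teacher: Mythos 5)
Your proof is correct and takes essentially the same approach as the paper's: there, the conjugated field $D_nKD_n^{-1}$ is written out explicitly and the coefficients of $\frac{\partial}{\partial u^j_{n\pm k}}$ and $\frac{\partial}{\partial u^j_{n}}$ on both sides of (\ref{DKlemma}) are compared, which yields exactly the recursion $D_n\alpha^j(m-1)=h\,\alpha^j(m)$ with seed $\alpha^j(0)=0$ that you obtain by evaluating both sides on the coordinates $u^i_{n+m}$. The only differences are cosmetic (evaluation on coordinate functions versus explicit coefficient comparison, which lets you skip the $X_s$-components the paper writes out but never uses), together with your added care about the case $h\equiv 0$ and the zero locus of $h$, points the paper passes over silently.
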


\begin{proof} First, we find an explicit representation of the operator $D_nKD_n^{-1}$ and then by substituting the result into equation (\ref{DKlemma}) we obtain 
\begin{align} \label{eqlemma}
&D_n(\alpha^j(-1))\sum^N_{s=1}D_n\left(\frac{\partial}{\partial u^j_{n-1}}G^s_n\right)X_s+ D_n(\alpha^j(-1))\frac{\partial}{\partial u_n^j}+D_n(\alpha^j(-2))\frac{\partial}{\partial u^j_{n-1}}\nonumber\\
&+\sum_{k=2}^{\infty}D_n(\alpha^j(k-1))\frac{\partial}{\partial u^j_{n+k}}+D_n(\alpha^j(-k-1))\frac{\partial}{\partial u^j_{n-k}}\\
&=h\left(\sum_{k=1}^{\infty}\alpha^j(k)\frac{\partial}{\partial u^j_{n+k}}+\alpha^j(-k)\frac{\partial}{\partial u^j_{n-k}}\right) \nonumber
\end{align}
for any $j$ from the segment $1\leq j\leq N$.

By comparing the coefficients before the operators $\frac{\partial}{\partial u^j_{n+k}}$ and $\frac{\partial}{\partial u^j_{n-k}}$ in the equation (\ref{eqlemma}) one can easily prove that $\alpha^j(k)=0$ for all values of $k$.
\end{proof}

For the case of the algebra $L_n$ we have a similar statement.

\begin{lemma} \label{Lemma2} Suppose that the vector field
\begin{equation}\label{Ylemma} 
Y=\sum_{j=1}^{N}\left(\alpha^j(1)\frac{\partial}{\partial u^j_{n,x}}+\alpha^j(2)\frac{\partial}{\partial u^j_{n,xx}}+ \alpha^j(3)\frac{\partial}{\partial u^j_{n,xxx}}+\cdots\right)
\end{equation}
solves the equation
\begin{equation}\label{DxYlemma} 
[D_x,Y]=hY,
\end{equation}
where $h$ is a function of the dynamical variables, then $Y=0$.
\end{lemma}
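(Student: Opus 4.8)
The plan is to follow the pattern of Lemma~\ref{Lem1}, now using the commutator with $D_x$ in place of conjugation by $D_n$. Throughout, write $u^j_{n,kx}$ for the $k$-fold $x$-derivative of $u^j_n$, so that $Y=\sum_{j=1}^{N}\sum_{k\geq1}\alpha^j(k)\,\partial/\partial u^j_{n,kx}$. First I would record the explicit form of the total derivative acting on the dynamical variables,
\begin{equation*}
D_x=\frac{\partial}{\partial x}+\sum_{j=1}^{N}u^j_{n,x}\frac{\partial}{\partial u^j_n}+\sum_{j=1}^{N}\sum_{m\geq1}u^j_{n,(m+1)x}\frac{\partial}{\partial u^j_{n,mx}}+\sum_{j=1}^{N}\sum_{p\neq0}\big(D_x u^j_{n+p}\big)\frac{\partial}{\partial u^j_{n+p}},
\end{equation*}
emphasising the single structural fact on which everything hinges: among the $x$-derivative variables, each horizontal coefficient $D_x u^j_{n+p}$ ($p\neq0$) depends on the first derivatives $u^i_{n,x}$ only, never on the higher ones, because every $F^i$ and every $D_n^{s}G^i$ contains $u^i_{n,x}$ but no $u^i_{n,xx},u^i_{n,xxx},\dots$.

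Next I would expand
\begin{equation*}
[D_x,Y]=\sum_{j,k}\big(D_x\alpha^j(k)\big)\frac{\partial}{\partial u^j_{n,kx}}+\sum_{j,k}\alpha^j(k)\,\Big[D_x,\frac{\partial}{\partial u^j_{n,kx}}\Big]
\end{equation*}
and separate the result into its vertical part (components along the $\partial/\partial u^j_{n,mx}$) and its horizontal part (components along $\partial/\partial u^j_n$ and the $\partial/\partial u^j_{n+p}$). Since $hY$ in (\ref{DxYlemma}) is purely vertical, every horizontal component of $[D_x,Y]$ must vanish. By the structural fact, the coefficient of $\partial/\partial u^j_n$ is generated only by the $k=1$ summand, through the piece $u^j_{n,x}\,\partial/\partial u^j_n$ of $D_x$, and a short computation shows it equals $-\alpha^j(1)$; hence $\alpha^j(1)=0$ for every $j=1,\dots,N$.

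Finally, matching the vertical components produces a triangular recursion. For $k\geq2$ one has $[D_x,\partial/\partial u^j_{n,kx}]=-\,\partial/\partial u^j_{n,(k-1)x}$, so the term $u^j_{n,(m+1)x}\,\partial/\partial u^j_{n,mx}$ of $D_x$ contributes $-\alpha^j(m+1)$ to the coefficient of $\partial/\partial u^j_{n,mx}$, which is therefore $D_x\alpha^j(m)-\alpha^j(m+1)$ and must equal $h\alpha^j(m)$. Thus
\begin{equation*}
\alpha^j(m+1)=D_x\alpha^j(m)-h\,\alpha^j(m),\qquad m\geq1,
\end{equation*}
and, starting from $\alpha^j(1)=0$, induction on $m$ forces $\alpha^j(m)=0$ for every $m$ and $j$, i.e. $Y=0$.

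I expect the only genuinely delicate step to be the bookkeeping inside $[D_x,\partial/\partial u^j_{n,kx}]$, namely confirming that the $\partial/\partial u^j_n$-component is produced exclusively by $\alpha^j(1)$. This is precisely where the structural fact is used: were the horizontal coefficients of $D_x$ to depend on the higher $x$-derivatives $u^i_{n,xx},u^i_{n,xxx},\dots$, the cascade would not close at the bottom. Once this dependence is ruled out, the remainder is the same mechanical comparison of coefficients as in Lemma~\ref{Lem1}.
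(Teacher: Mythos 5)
Your proposal is correct and follows essentially the same route as the paper: the authors also compute $[D_x,Y]$ explicitly, observe that it equals $\sum_{j}\bigl(-\alpha^j(1)\tfrac{\partial F^j_n}{\partial u^j_{n,x}}\tfrac{\partial}{\partial u^j_{n+1}}-\alpha^j(1)\tfrac{\partial}{\partial u^j_n}+(D_x\alpha^j(1)-\alpha^j(2))\tfrac{\partial}{\partial u^j_{n,x}}+\cdots\bigr)$, and then compare coefficients of $\tfrac{\partial}{\partial u^j_{n+1}}$, $\tfrac{\partial}{\partial u^j_n}$, $\tfrac{\partial}{\partial u^j_{n,x}}$, $\tfrac{\partial}{\partial u^j_{n,xx}},\ldots$ to force $\alpha^j(1)=0$ and then, by the same triangular recursion $\alpha^j(m+1)=D_x\alpha^j(m)-h\,\alpha^j(m)$, all $\alpha^j(k)=0$. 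Your horizontal/vertical bookkeeping and the structural remark about $D_xu^j_{n+p}$ depending only on first derivatives are just a more explicit rendering of the same coefficient comparison.
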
 

\begin{proof}
From (\ref{Ylemma}) and (\ref{DxYlemma}) we easily obtain the equation 
\begin{align*} \label{eqlemma2}
&\sum_{j=1}^{N}\left(-\alpha^j(1)\frac{\partial F^j_n}{\partial u^j_{n,x}}  \frac{\partial}{\partial u^j_{n+1}}-\alpha^j(1)\frac{\partial}{\partial u^j_n}+(D_x\alpha^j(1)-\alpha^j(2)) \frac{\partial}{\partial u^j_{n,x}} + (D_x\alpha^j(2)-\alpha^j(3)) \frac{\partial}{\partial u^j_{n,xx}}+\cdots\right)\nonumber\\
&=h\sum_{j=1}^{N}\left(\alpha^j(1)\frac{\partial}{\partial u^j_{n,x}}+\alpha^j(2)\frac{\partial}{\partial u^j_{n,xx}}+ \alpha^j(3)\frac{\partial}{\partial u^j_{n,xxx}}+\cdots \right).    
\end{align*}
Now we compare the coefficients before the operators $\frac{\partial}{\partial u^j_{n+1}}$, $\frac{\partial}{\partial u^j_n}$, $\frac{\partial}{\partial u^j_{n,x}}$, $\frac{\partial}{\partial u^j_{n,xx}}$, \ldots \quad to make sure that $\alpha^j(k)=0$ for~all $k$. That completes the proof of Lemma~\ref{Lemma2}.
\end{proof}

\section{Characteristic algebras for the reduced system (\ref{eq00})}

In this section we concentrate on the differential-difference system (\ref{eq00}) being a reduction of the differential-difference equation (\ref{eq0}). System (\ref{eq00}) is a particular case of the general system (\ref{eqG}), where assumed that 
\begin{align*}
&F^1=H^{(1)}(u_{n,x}^1,u_n^{2},u_{n+1}^1,u_n^1), \\
&F^j=F(u_{n,x}^j,u_n^{j+1},u_{n+1}^j,u_n^j,u_{n+1}^{j-1}) \quad \mbox{for} \quad 2 \leq j \leq N-1,\\
&F^N=H^{(2)}(u_{n,x}^N,u_{n+1}^N,u_n^N,u_{n+1}^{N-1}). 
\end{align*} 
Suppose that system (\ref{eq00}) is integrable, then its characteristic algebra $L_x$ has a finite dimension. The problem is figuring out what this means in terms of the right hand side of the differential-difference equation (\ref{eq0}), i.e. in terms of the function $F(u_{n,x}^j,u_n^{j+1},u_{n+1}^j,u_n^j,u_{n+1}^{j-1})$. Below we make the first step towards solving the problem. We define a sequence of the operators in $L_x$ according to the rule
\begin{equation}\label{comKj}
K_1=[X_j,K_0],\quad K_2=[X_j,K_1], \quad \ldots,\quad K_m=[X_j,K_{m-1}],\quad \ldots.
\end{equation}
Since the algebra $L_x$ is finite dimensional then there exists an integer $M$ such that operator $K_M$ is represented as a linear combination of the preceding members of the sequence:
\begin{equation}\label{KM}
K_M=\lambda K_{M-1}+\lambda_{1}K_{M-2}+ \dots +\lambda_{M-1}K_1,
\end{equation}
where the operators $K_{M-1}, K_{M-2}, \ldots, K_{1}$ are linearly independent. 

Explicit formula (\ref{K0}) shows that operator $K_1=[X_j,K_0]$ has the following coordinate representation 
\begin{equation*}\label{K1}
K_1=\frac{\partial}{\partial u_{n}^j}+X_j(F_{n}^j)\frac{\partial}{\partial u_{n+1}^j}+ X_j(G_{n}^j) \frac{\partial}{\partial u^j_{n-1}} +X_j(F_{n+1}^j)\frac{\partial}{\partial u_{n+2}^j}+ X_j(G_{n-1}^j) \frac{\partial}{\partial u^j_{n-2}}+\ldots,
\end{equation*}
while for $m\geq 2$ we have
\begin{equation}\label{Km}
K_m=X^m_j(F_{n}^j)\frac{\partial}{\partial u_{n+1}^j}+ X^m_j(G_{n}^j) \frac{\partial}{\partial u^j_{n-1}} +X^m_j(F_{n+1}^j)\frac{\partial}{\partial u_{n+2}^j}+ X^m_j(G_{n-1}^j) \frac{\partial}{\partial u^j_{n-2}}+\ldots.
\end{equation}
Since operator $K_1$ contains the term $\frac{\partial}{\partial u_{n}^j}$ and the other operators in (\ref{KM}) do not, then the coefficient $\lambda_{M-1}$ vanishes.

Automorphism (\ref{automorphism}) provides an effective tool for studying the problem of decomposition (\ref{KM}). Above in (\ref{X}), (\ref{DKD}) we have already described the action of the automorphism on the basic operators. We can derive similar formulas for the members of the sequence (\ref{comKj})
\begin{align}
&D_nK_1D^{-1}_n=q_jK_1-q_j^2K_1(F^j_n)X_j-q_{j-1}F^{j-1}_{n,u_n^j}q_jX_{j-1}-q_{j+1}F^{j+1}_{n,u_n^j}X_{j+1},\label{DK1D}\\
&D_nK_2D^{-1}_n=q_j^2K_2+q_jX_j(q_j)K_1+r_2X_j-q_{j-1}F^{j-1}_{u_n^j}q_jX_j(q_j)X_{j-1},\label{DK2D}\\
&D_nK_3D^{-1}_n=q_j^3K_3+3q_j^2X(q_j)K_2+(q_jX_j)^2(q_j)K_1+r_3X_j-q_{j-1}F^{j-1}_{n,u_n^j}\left\{(q_jX_j)^2(q_j)\right\}X_{j-1}.\label{DK3D}
\end{align}
Here $q_j=\left(\frac{\partial F^j_n}{\partial u^j_{n,x}}\right)^{-1}$, and $r_2$, $r_3$ are functions on $q_j$ and its derivatives. For the general case we can prove by induction that 
\begin{align}\label{DKmD}
\begin{aligned}
D_nK_mD^{-1}_n=&q_j^mK_m+p_mK_{m-1}+s_mK_{m-2}+\ldots\\
&+(q_jX_j)^{m-1}(q_j)K_1+r_mX_j+q_{j-1}F^{j-1}_{n,u_n^j}\left\{(q_jX_j)^{m-1}(q_j)\right\}X_{j-1},
\end{aligned}
\end{align}
where 
\begin{align*}
&p_m=\frac{1}{2}m(m-1)q_j^{m-1}X_j(q_j),\\
&s_m=\frac{1}{6}m(m-1)(m-2)q_j^{m-1}X_j^2(q_j)+\frac{1}{24}m(m-1)(m-2)(3m-5)q^{m-2}(X_j(q_j))^2.
\end{align*}
Explicit expression for $r_m$ is rather complicated and we do not specify it.

Let us apply automorphism (\ref{automorphism}) to both sides of the equation (\ref{Km}) and get
\begin{equation}\label{equality}
\begin{aligned}
q^M_jK_M&+p_MK_{M-1}+s_MK_{M-2}+\ldots\\
&+(q_jX_j)^{M-1}(q_j)K_1+r_MX_j+q_{j-1}F^{j-1}_{n,u_n^j}\left\{(q_jX_j)^{M-1}(q_j)\right\}X_{j-1}\\
&=D_n(\lambda)\left(q_j^{M-1}K_{M-1}+\ldots\right)+D_n(\lambda_1)\left(q_j^{M-2}+\ldots\right)+\ldots.
\end{aligned}
\end{equation}
We first replace $K_M$ due to (\ref{Km}) and then collect in (\ref{equality}) the coefficients in front of the linearly independent operators $K_{M-1}$, $K_{M-2}$, \ldots, $K_1$: 
\begin{align}
&K_{M-1}: \quad q_j^M\lambda+p_M=D_n(\lambda)q_j^{M-1}, \label{eq1_lam}\\
&K_{M-2}: \quad q_j^M\lambda_1+s_M=D_n(\lambda)p_{M-1}+D_n(\lambda_1)q_j^{M-2}, \label{eq2_lam}\\
&\ldots\ldots\ldots\ldots\ldots \nonumber\\
&K_{1}:   \qquad (q_jX_j)^{M-1}q_j=D_n(\lambda)(q_jX_j)^{M-2}q_j+\ldots+D_n(\lambda)q_jX_j(q_j). \label{eq3_lam}
\end{align}
Due to the explicit expression (\ref{Km}) the coefficients of the operators $K_2$, $K_3$, \ldots, $K_M$ depend on $u_{n,x}^j$ and on the variables $u^{j+1}_{n}$, $u^{j}_{n}$, $u^{j-1}_{n}$ and their shifts on $n$. The factors $\lambda,\lambda_1,\ldots,\lambda_{M-2}$ might depend only on these variables. However we have relations (\ref{eq1_lam})--(\ref{eq3_lam}) which provide an additional restriction for the factors. Indeed the coefficients $q_j$, $p_m$, $s_m$ in (\ref{eq1_lam})--(\ref{eq3_lam}) depend only on the variables $u^j_{n,x}$, $u^{j+1}_{n}$, $u^{j}_{n+1}$, $u^{j}_{n}$, $u^{j-1}_{n+1}$, hence $\lambda,\lambda_1,\ldots,\lambda_{M-2}$ might depend only on the variables $u_{n,x}^j$, $u^j_n$.

Relations (\ref{eq1_lam})--(\ref{eq3_lam}) determine a system of functional equations with the set of unknown functions $\lambda$, $\lambda_1$, \ldots, $\lambda_{M-2}$ depending on two variables $u_{n,x}^j$, $u^j_n$. Indeed, each of these equations contains the unknowns taken at different points $(u_{n,x}^j, u^j_n)$ and $\left(D_nu^j_{n,x}, D_nu^j_n\right)=\left(F\left(u^j_{n,x},u^{j+1}_{n},u^{j}_{n+1},u^{j}_{n},u^{j-1}_{n+1}\right),u^j_{n+1}\right)$. An important peculiarity of the system (\ref{eq1_lam})--(\ref{eq3_lam}) is that it is highly overdetermined since the coefficients of the equations depend on five independent variables while the solutions depend only on two of them. 

Let us finalize the reasoning above as a statement.

\begin{theorem} {\bf (Necessary condition of integrability of system (\ref{eq00}) and hence due to Conjecture~\ref{conjecture} of the equation (\ref{eq0}).)}
If system (\ref{eq00}) admits the complete set of $x$-integrals then the overdetermined system of equations (\ref{eq1_lam})--(\ref{eq3_lam}) admits a solution $\left(\lambda,\lambda_1,\ldots,\lambda_{M-2}\right)$ depending only on $u_{n,x}^j$, $u^j_n$ for any $j$ from the segment $1\leq j\leq N$.
\end{theorem}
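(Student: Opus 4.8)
The plan is to read off the system (\ref{eq1_lam})--(\ref{eq3_lam}) as the obstruction that the conjugation automorphism (\ref{automorphism}) is forced to respect, once finite-dimensionality of $L_x$ has made the nested commutators close up. First I would invoke Theorem~\ref{Th2}: admitting a complete set of $x$-integrals is equivalent to $L_x$ having finite dimension. Applied to the commutators defined in (\ref{comKj}), this means the sequence $K_1, K_2, K_3, \ldots$ cannot stay $A$-linearly independent indefinitely, so there is a smallest $M$ for which $K_M$ falls into the $A$-span of the preceding members; this is precisely the decomposition (\ref{KM}), with $K_1, \ldots, K_{M-1}$ linearly independent. The vanishing of the coefficient $\lambda_{M-1}$ of $K_1$ is then immediate by comparing the $\partial/\partial u^j_n$ components, since by (\ref{Km}) only $K_1$ carries such a term.

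Next I would use that $Z\mapsto D_nZD_n^{-1}$ is an automorphism of $L_x$, so it carries the relation (\ref{KM}) to a valid relation among the images. The technical core is the explicit action formula (\ref{DKmD}) for $D_nK_mD_n^{-1}$, which I would establish by induction on $m$, taking (\ref{X}), (\ref{DKD}) and (\ref{DK1D})--(\ref{DK3D}) as the base and propagating through $K_m=[X_j,K_{m-1}]$ together with the already-known action on $X_j$ and $X_{j\pm1}$. Here the Leibniz-type consistency rules 1)--2) of \S2 control how the scalar factor $q_j=(\partial F^j_n/\partial u^j_{n,x})^{-1}$ and its $X_j$-derivatives accumulate into the coefficients $p_m$, $s_m$, $r_m$. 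Substituting (\ref{DKmD}) into both sides of the image of (\ref{KM}) then produces the identity (\ref{equality}).

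With (\ref{equality}) in hand, I would replace the leading $K_M$ on the left by its decomposition (\ref{KM}), so that $q_j^M K_M$ contributes $q_j^M(\lambda K_{M-1}+\lambda_1 K_{M-2}+\cdots)$, and then compare coefficients in front of the linearly independent operators $K_{M-1}, K_{M-2}, \ldots, K_1$. The defining freeness property (\ref{linear_comb}) forces each such coefficient to vanish separately, and these vanishing conditions are exactly the functional equations (\ref{eq1_lam})--(\ref{eq3_lam}); the residual $X_j$ and $X_{j-1}$ terms appearing on both sides give only auxiliary relations that can be set aside for the stated conclusion. To pin down the arguments of the unknowns, I would inspect which dynamical variables occur in the structure functions: $q_j$, $p_m$, $s_m$ involve only $u^j_{n,x}, u^{j+1}_n, u^j_{n+1}, u^j_n, u^{j-1}_{n+1}$, while (\ref{KM}) permits $\lambda, \lambda_1, \ldots, \lambda_{M-2}$ to depend only on the variables entering the $K_m$; intersecting these two constraints leaves exactly $u^j_{n,x}$ and $u^j_n$, which is the asserted dependence.

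The main obstacle I anticipate is the inductive verification of (\ref{DKmD}) with the precise coefficients: one must check that conjugating the top term yields $q_j^m K_m$, that the first subleading correction is exactly $p_m=\tfrac12 m(m-1)q_j^{m-1}X_j(q_j)$, and that no operator outside the span $\{K_m, \ldots, K_1, X_j, X_{j-1}\}$ is generated. In particular, confirming that the $X_{j+1}$ contribution present in (\ref{DK1D}) drops out for $m\geq 2$ while the $X_{j-1}$ contribution survives only in the controlled form displayed in (\ref{DKmD}), and tracking the remainder $r_m$ through each step, is the delicate bookkeeping on which the clean separation into the overdetermined system (\ref{eq1_lam})--(\ref{eq3_lam}) ultimately rests.
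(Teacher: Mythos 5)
Your proposal is correct and follows essentially the same route as the paper's own derivation: Theorem~\ref{Th2} to obtain finite dimensionality of $L_x$, the closing-up of the commutator sequence (\ref{comKj}) into the decomposition (\ref{KM}) with $\lambda_{M-1}=0$, the conjugation formula (\ref{DKmD}) established by induction from (\ref{X}), (\ref{DKD}), (\ref{DK1D})--(\ref{DK3D}), substitution into (\ref{equality}) and comparison of coefficients of the linearly independent operators $K_{M-1},\ldots,K_1$ to produce (\ref{eq1_lam})--(\ref{eq3_lam}), and the same variable-counting argument confining $\lambda,\lambda_1,\ldots,\lambda_{M-2}$ to depend on $u^j_{n,x}$, $u^j_n$ only. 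Even the fine points you flag --- that $K_M$ must be replaced via its decomposition when collecting coefficients in (\ref{equality}), and that the $X_{j+1}$ contribution of (\ref{DK1D}) disappears for $m\geq 2$ --- coincide with the paper's treatment.
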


Below we show how to reduce the system (\ref{eq1_lam})--(\ref{eq3_lam}) to a system of differential equations by using the characteristic operators in the direction of $n$. We start with the equation (\ref{eq1_lam}) that is specified to the form 
\begin{align*}
\lambda+\varepsilon\frac{1}{q_j}X_j(q_j)=D_n(\lambda)\frac{1}{q_j},
\end{align*}
where $\varepsilon=\frac{M(M-1)}{2}$. Since $q_j=\left(\frac{\partial F^j_n}{\partial u^j_{n,x}}\right)^{-1}$ equation reduces to 
\begin{equation}\label{qF}
\lambda \frac{\partial F^j_n}{\partial u^j_{n,x}}-\varepsilon\frac{\partial^2 F^j_n}{\partial (u^j_{n,x})^2}=D_n(\lambda)\left(\frac{\partial F^j_n}{\partial u^j_{n,x}}\right)^2.
\end{equation} 
We can rewrite the latter equation as 
\begin{equation}\label{eq_lambda2}
\lambda-\varepsilon\frac{\partial}{\partial u^j_{n,x}}\log \frac{\partial F^j_n}{\partial u^j_{n,x}}=D_n(\lambda)\frac{\partial F^j_n}{\partial u^j_{n,x}}.
\end{equation}
Let us apply the operator $\frac{\partial}{\partial u^j_{n+1}}$ to both sides of the equation (\ref{eq_lambda2}). By taking into account that $\lambda=\lambda\left(u^j_{n,x},u^j_{n}\right)$ we obtain 
\begin{equation}\label{eq_lambda3}
-\varepsilon\frac{\partial}{\partial u^j_{n+1}}\frac{\partial}{\partial u^j_{n,x}}\left(\log \frac{\partial F^j_n}{\partial u^j_{n,x}}\right)=\frac{\partial}{\partial u^j_{n+1}}\left(D_n(\lambda)\frac{\partial F^j_n}{\partial u^j_{n,x}}\right).
\end{equation}
As it was discussed above we have the relation $D_n^{-1}\frac{\partial}{\partial u^j_{n+1}}D_n=Y_{j,1}$ (see formula (\ref{vectorfields})) when the operators act on the variables $u_n^j$, $u_{n-1}^j$, $u_{n,x}^j$. Since $\frac{\partial}{\partial u^j_{n,x}}=X_j$ we can represent (\ref{eq_lambda3}) in the form 
\begin{equation}\label{eq_lambda4}
-\varepsilon Y_{j,1}\left\{D_n^{-1}X_j\log X_j(F^j_n)\right\}=Y_{j,1}\left\{\lambda D_n^{-1}(X_j(F^j_n))\right\}.
\end{equation}
In what follows we need some rather simple formulas which are produced by the obvious identity 
\begin{equation*}
u^j_{n,x}=G^j\left(F^j\left(u^j_{n,x},u^{j+1}_{n},u^{j}_{n+1},u^{j}_{n},u^{j-1}_{n+1}\right),u^{j+1}_{n},u^{j}_{n+1},u^{j}_{n},u^{j-1}_{n+1}\right).
\end{equation*}
Differentiation of the identity with respect to the variable $u^j_{n,x}$ gives rise to equation 
\begin{equation*}
1=\frac{\partial D_nG^j_n}{\partial F^j_n}\times\frac{\partial F^j_n}{\partial u^j_{n,x}}
\end{equation*}
which implies 
\begin{equation*}
1=\left(D_n\frac{\partial G^j_n}{\partial u^j_{n,x}}\right)\left(\frac{\partial F^j_n}{\partial u^j_{n,x}}\right)
\end{equation*}
or the same 
\begin{equation}\label{relation_1}
D_n^{-1}X_jF^j_n=\frac{1}{X_jG^j_n}.
\end{equation}
By combining (\ref{X}) and (\ref{relation_1}) we find 
\begin{equation}\label{relation_2}
D_n^{-1}X_jD_n=\frac{1}{X_jG^j_n}X_j.
\end{equation}
By differentiating identity with respect to $u^j_{n+1}$ we obtain an equation 
\begin{equation*}
0=D_n\left(\frac{\partial G^j_n}{\partial u^j_{n}}\right)+D_n\left(\frac{\partial G^j_n}{\partial u^j_{n,x}}\right)\frac{\partial F^j_n}{\partial u^j_{n+1}},
\end{equation*}
which implies 
\begin{align*}\label{dif_Y_h}
D_n^{-1}\frac{\partial F^j_n}{\partial u^j_{n+1}}=-\frac{\partial G^j_n/\partial u_n^j}{\partial G^j_n/\partial u^j_{n,x}}.
\end{align*}
Now equation (\ref{eq_lambda4}) is easily converted due to (\ref{relation_1}), (\ref{relation_2}) to the form 
\begin{equation}\label{Y_lambda}
\varepsilon Y_{j,1}\left(\frac{X_j^2G^j_n}{(X_jG^j_n)^2}\right)=Y_{j,1}\left(\frac{\lambda}{X_jG^j_n}\right).
\end{equation}

Let as replace the characteristic operator $Y_{j,1}$ by its explicit representation (\ref{vectorfields}). Since in (\ref{Y_lambda}) the operator is applied to functions depending on the variables $u^j_{n,x}$, $u^{j+1}_{n-1}$, $u^{j}_{n}$, $u^{j}_{n-1}$, $u^{j-1}_{n}$ only we omit the terms which we do not use. More precisely we use the relation that holds for the function h depending on these variables 
\begin{align*}\label{Y_h}
Y_{j,1}h=\left(\frac{\partial}{\partial u^j_{n}}-D_n^{-1}\left(\frac{\partial F^j_n}{\partial u^j_{n+1}}\right)\frac{\partial}{\partial u^j_{n,x}}\right)h.
\end{align*}
Therefore equation (\ref{Y_lambda}) gets the form 
\begin{equation}\label{final_eq_lambda}
\left(\frac{\partial}{\partial u^j_{n}}-\frac{\partial G^j_n/\partial u_n^j}{\partial G^j_n/\partial u^j_{n,x}}\frac{\partial}{\partial u^j_{n,x}}\right)\left(\frac{\lambda}{\partial G^j_n/\partial u^j_{n,x}}-\varepsilon \frac{\partial^2 G^j_n/\partial (u_{n,x}^j)^2}{\left(\partial G^j_n/\partial u^j_{n,x}\right)^2}\right)=0.
\end{equation}

Thus the functional equation (\ref{eq1_lam}) produces a differential consequence (\ref{final_eq_lambda}). In a similar way we can derive differential consequences for the other functional equations of the system (\ref{eq1_lam})--(\ref{eq3_lam}). For example (\ref{eq2_lam}) implies 
\begin{equation*}\label{final_eq2_lambda}
\left(\frac{\partial}{\partial u^j_{n}}-\frac{\partial G^j_n/\partial u_n^j}{\partial G^j_n/\partial u^j_{n,x}}\frac{\partial}{\partial u^j_{n,x}}\right)\left(\delta\frac{X_j^3(G^j_n)}{(X_j(G^j_n))^3}+\gamma\frac{(X_j^2(G^j_n))^2}{(X_j(G^j_n))^4}-\varepsilon_1\frac{\lambda X_j^2(G^j_n)}{(X_j(G^j_n))^2}+\frac{\lambda_1}{(X_j(G^j_n))^2}\right)=0,
\end{equation*}
where $\varepsilon_1=\frac{(M-1)(M-2)}{2}$, $\delta=\frac{M(M-1)(M-2)}{6}$, $\gamma=\frac{M(M-1)(M-2)(M-3)}{8}$.

By using the identity $(q_jX_j)^mq_j=X^{m+1}_j(G^j_n)$, that is easily proved by induction one can rewrite the last equation of the system as follows 
\begin{equation*}\label{final_eq3_lambda}
\left(X_j^M-\lambda X_j^{M-1}-\lambda_1 X_j^{M-2}- \ldots -\lambda_{M-2} X_j^2\right)G^j_n=0.
\end{equation*}

Finally we can state that the overdetermined system of functional equations (\ref{eq1_lam})--(\ref{eq3_lam}) is reduced to an overdetermined system of differential equations, that is certainly more simple. An equation (\ref{final_eq_lambda}) is of a special interest, it should admit a solution $\lambda=\lambda\left(u_n^j,u_{n,x}^j\right)$ depending only on two variables $u_n^j$ and $u_{n,x}^j$. It is a severe requirement since the coefficients of the equation (\ref{final_eq_lambda}) depend generally on five variables $u^j_{n,x}$, $u^{j+1}_{n-1}$, $u^{j}_{n}$, $u^{j}_{n-1}$, $u^{j-1}_{n}$.

\begin{example}
We consider an illustrative example by taking a scalar equation of the form
\begin{equation}\label{example}
u_{n+1,x}=u_{n,x}u_{n+1}^2.
\end{equation} 
Here $F_n=u_{n,x}u_{n+1}^2$, $\frac{\partial F_n}{\partial u_{n,x}}=u_{n+1}^2$ and $\frac{\partial^2 F_n}{\partial (u_{n,x})^2}=0$. Equation (\ref{qF}) turns into 
\begin{equation}\label{exampleequation}
\frac{D_n(\lambda)}{\lambda}=\frac{1}{u_{n+1}^2}
\end{equation}
Due to equation (\ref{example}) we can rewrite (\ref{exampleequation}) as 
\begin{equation*}
\frac{D_n(\lambda)}{\lambda}=\frac{u_{n,x}}{u_{n+1,x}}.
\end{equation*}
Evidently it has a solution of the necessary form $\lambda=\frac{c}{u_{n,x}}$ where $c$ is a constant. 
\end{example}

\section{Examples, approving the Conjecture~\ref{conjecture}}

The aim of this section is discussing the integrable lattices of the form (\ref{eq0}) given in \cite{FNR}. Since these equations are written as
\begin{equation*}\label{eq0fer}  
v_{k+1,x}^s =\bar F(v_{k,x}^{s+1},v_k^{s},v_{k+1}^s,v_k^{s+1},v_{k+1}^{s+1}), \quad -\infty < s,k < \infty,
\end{equation*}
at first we convert them into the form (\ref{eq0}) by a linear transformation of the independent variables: $v_k^s=u_n^{j+1}$, $n=k$, $j=-k-s$. Below we give the list of integrable differential-difference equations from \cite{FNR} rewritten in the new variables
\begin{align}
& 1) \quad u^{j}_{n+1,x}=u^{j}_{n,x}+e^{u^{j}_{n+1}-u^{j+1}_{n}}-e^{u^{j-1}_{n+1}-u^{j}_{n}};\label{eq1}\\
& 2) \quad u^{j}_{n+1,x}=u^{j}_{n,x}+\frac{e^{u^{j-1}_{n+1}}}{e^{u^{j}_{n+1}}}-\frac{e^{u^{j-1}_{n+1}}}{e^{u^{j}_{n}}}-\frac{e^{u^{j}_{n}}}{e^{u^{j+1}_{n}}}+\frac{e^{u^{j}_{n+1}}}{e^{u^{j+1}_{n}}};\label{eq2}\\
& 3) \quad u^{j}_{n+1,x}=u^{j}_{n,x}\frac{\left(u^{j}_{n+1}\right)^2}{u^{j-1}_{n+1}u^{j+1}_{n}};\label{eq3}\\
& 4) \quad u^{j}_{n+1,x}=u^{j}_{n,x}\frac{\left(u^{j}_{n+1}-u^{j+1}_{n}\right)}{\left(u^{j-1}_{n+1}-u^{j}_{n}\right)};\label{eq4}\\
& 5) \quad u^{j}_{n+1,x}=u^{j}_{n,x}\frac{u^{j}_{n+1}\left(u^{j}_{n+1}-u^{j+1}_{n}\right)}{u^{j+1}_{n}\left(u^{j-1}_{n+1}-u^{j}_{n}\right)};\label{eq5}\\
& 6) \quad u^{j}_{n+1,x}=u^{j}_{n,x}\frac{\left(u^{j-1}_{n+1}-u^{j}_{n+1}\right)\left(u^{j}_{n+1}-u^{j+1}_{n}\right)}{\left(u^{j-1}_{n+1}-u^{j}_{n}\right)\left(u^{j}_{n}-u^{j+1}_{n}\right)};\label{eq6}\\
& 7) \quad u^{j}_{n+1,x}=u^{j}_{n,x}\frac{\sinh\left(u^{j-1}_{n+1}-u^{j}_{n+1}\right)\sinh\left(u^{j}_{n+1}-u^{j+1}_{n}\right)}{\sinh\left(u^{j-1}_{n+1}-u^{j}_{n}\right)\sinh\left(u^{j}_{n}-u^{j+1}_{n}\right)}\label{eq7}
\end{align}
their Lax pairs, also found in \cite{FNR} are respectively, of the form
\begin{align} 
& 1) \quad \begin{cases}\label{Laxeq1}
\varphi^{j}_{n+1}=-\varphi^{j+1}_{n}+e^{u^{j}_{n}-u^{j}_{n+1}}\varphi^{j}_{n},\\
\varphi^{j}_{n,x}=-e^{u^{j-1}_{n}-u^{j}_{n}}\varphi^{j-1}_{n};
\end{cases} \\
& 2) \quad \begin{cases}\label{Laxeq2}
\varphi^{j}_{n+1}=-e^{-u^{j}_{n+1}+u^{j}_{n}}\left(\varphi^{j+1}_{n}-\varphi^{j}_{n}\right)+\varphi^{j+1}_{n},\\
\varphi^{j}_{n,x}=e^{u^{j-1}_{n}-u^{j}_{n}}\left(\varphi^{j-1}_{n}-\varphi^{j}_{n}\right);
\end{cases}\\
& 3) \quad \begin{cases}\label{Laxeq3}
\varphi^{j}_{n+1}=-\frac{u^{j}_{n+1}}{u^{j+1}_{n}}\left(\varphi^{j}_{n}-\varphi^{j+1}_{n}\right),\\
\varphi^{j}_{n,x}=-\frac{u^{j}_{n,x}}{u^{j}_{n}}\left(\varphi^{j-1}_{n}-\varphi^{j}_{n}\right);
\end{cases} \\
& 4) \quad \begin{cases}\label{Laxeq4}
\varphi^{j}_{n+1}=\left(u^{j+1}_{n}-u^{j}_{n+1}\right)\varphi^{j}_{n}+\varphi^{j+1}_{n},\\
\varphi^{j}_{n,x}=-u^{j}_{n,x}\varphi^{j-1}_{n};
\end{cases} \\
& 5) \quad \begin{cases}\label{Laxeq5}
\varphi^{j}_{n+1}=\left(1-\frac{u^{j}_{n+1}}{u^{j+1}_{n}}\right)\varphi^{j}_{n}-\frac{u^{j}_{n+1}}{u^{j+1}_{n}}\varphi^{j+1}_{n},\\
\varphi^{j}_{n,x}=\frac{u^{j}_{n,x}}{u^{j}_{n}}\left(\varphi^{j-1}_{n}+\varphi^{j}_{n}\right);
\end{cases}\\
& 6) \quad \begin{cases}\label{Laxeq6}
\varphi^{j}_{n+1}=\frac{u^{j}_{n+1}-u^{j+1}_{n}}{u^{j}_{n}-u^{j+1}_{n}}\varphi^{j}_{n}+\left(1-\frac{u^{j}_{n+1}-u^{j+1}_{n}}{u^{j}_{n}-u^{j+1}_{n}}\right)\varphi^{j+1}_{n},\\
\varphi^{j}_{n,x}=\frac{u^{j}_{n,x}}{u^{j-1}_{n}-u^{j}_{n}}\left(\varphi^{j-1}_{n}+\varphi^{j}_{n}\right);
\end{cases}\\
& 7) \quad \begin{cases}\label{Laxeq7}
\varphi^{j}_{n+1}=\frac{e^{2\left(u^{j}_{n+1}-u^{j+1}{n}\right)}-1}{e^{2\left(u^{j}_{n}-u^{j+1}_{n}\right)}-1}\varphi^{j}_{n}+\left(1-\frac{e^{2\left(u^{j}_{n+1}-u^{j+1}_{n}\right)}-1}{e^{2\left(u^{j}_{n}-u^{j+1}_{n}\right)}-1}\right)\varphi^{j+1}_{n},\\
\varphi^{j}_{n,x}=\frac{2u^{j}_{n,x}}{e^{2\left(u^{j-1}_{n}-u^{j}_{n}\right)}-1}\left(\varphi^{j-1}_{n}-\varphi^{j}_{n}\right).
\end{cases}
\end{align}

It can be shown that equations (\ref{eq1})-(\ref{eq7}) admit infinite sequences of suitable finite field reductions being integrable in the sense of Darboux systems of differential-difference hyperbolic type equations. Simultaneously, the Lax pairs (\ref{Laxeq1})-(\ref{Laxeq7}) of equations also pass into the Lax pairs of the corresponding reductions. The crucial point of our algorithm is finding a special constraint consistent with the lattice that divides the lattice into two independent parts. We call such kind of constraint a degenerate boundary condition. Below we look for the degenerate boundary conditions for the equations (\ref{eq1})-(\ref{eq7}) and appropriate boundary conditions for their Lax pairs. 

The Lax pair is an important attribute of the integrable soliton equations. They also have useful applications for Darboux integrable systems. For example, in some cases, Lax pairs provide an effective tool for constructing characteristic integrals. We will now outline a simple algorithm suitable for this purpose.

Assume that system of equations (\ref{eqG}) admits a Lax pair, i.e. it is a compatibility condition of a pair of systems of linear equations 
\begin{align}
&\Phi_{n+1}=U_n\Phi_n\label{n},\\
&\Phi_{n,x}=V_n\Phi_n\label{x},
\end{align}
where the potentials $U_n$ and $V_n$ of the linear systems are matrices with the following triangular structure
\begin{equation}\label{U}
U_n=\left(\begin{array}{cccc}
a_{11,n}&a_{12,n}&\dots &a_{1N,n} \\
0&a_{22,n}&\dots &a_{2N,n} \\
\dots &\dots&\dots& \dots \\
0&0&\dots &a_{NN,n}
\end{array} \right),
\end{equation}
\begin{equation}\label{V}
V_n=\left(\begin{array}{cccc}
b_{11,n}&0&\dots &0 \\
b_{21,n}&b_{22,n}&\dots &0 \\
\dots &\dots&\dots& \dots \\
b_{N1,n}&b_{N2,n}&\dots &b_{NN,n}
\end{array} \right).
\end{equation}
Let $P^{(k)}_n$ be a product of the matrices
\begin{equation*}\label{P}
P^{(k)}_n=U_{n+k} U_{n+k-1}\cdots U_{n},
\end{equation*}
then obviously we have
\begin{equation}\label{DxP}
D_xP^{(k)}_n=V_{n+k+1}P^{(k)}_n-P^{(k)}_nV_{n}.
\end{equation}
Evidently $P^{(k)}_n$ is an upper triangular matrix
\begin{equation*} \label{P_matrix}
P^{(k)}_n=\left(\begin{array}{cccc}
p^{(k)}_{11,n}&p^{(k)}_{12,n}&\dots &p^{(k)}_{1N,n} \\
0&p^{(k)}_{22,n}&\dots &p^{(k)}_{2N,n} \\
\dots &\dots&\dots& \dots \\
0&0&\dots &p^{(k)}_{NN,n}
\end{array} \right).
\end{equation*}
In the equality (\ref{DxP}), we select the matrix elements located at the intersection of the first row and the last column. As a result, we obtain a scalar equality of the form
\begin{equation}\label{DxP1N}
D_xp^{(k)}_{1N,n}=p^{(k)}_{1N,n}(D_n^{k+1}b_{11}-b_{NN}).
\end{equation}
Now we can conclude:

\begin{lemma} \label{LemP-x}
If the entries of the matrix (\ref{V}) satisfy the relations $D_nb_{11}=b_{11}$ and $b_{11}=b_{NN}$ then function $J=p^{(k)}_{1N,n}$ is an $x$-integral.
\end{lemma}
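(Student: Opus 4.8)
The plan is to read the result off the scalar identity (\ref{DxP1N}), which has already been extracted from the Lax structure; no genuinely new computation is needed. By the definition recalled in \S2, $J$ is an $x$-integral precisely when $D_xJ=0$, so the entire task reduces to showing that the right-hand factor in (\ref{DxP1N}) vanishes for $J=p^{(k)}_{1N,n}$.

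First I would recall why the two hypotheses, which concern only the corner entries $b_{11}$ and $b_{NN}$, are exactly the quantities appearing in (\ref{DxP1N}). Iterating (\ref{n}) gives $\Phi_{n+k+1}=P^{(k)}_n\Phi_n$; differentiating in $x$ and using (\ref{x}) yields the operator identity (\ref{DxP}). Since $P^{(k)}_n$ is upper triangular while $V_{n+k+1}$ and $V_n$ are lower triangular (see (\ref{U}), (\ref{V})), the $(1,N)$ entry of $V_{n+k+1}P^{(k)}_n$ collapses to $(V_{n+k+1})_{11}\,p^{(k)}_{1N,n}=D_n^{k+1}b_{11}\cdot p^{(k)}_{1N,n}$, because in row $1$ only the diagonal entry of the lower triangular factor survives; likewise the $(1,N)$ entry of $P^{(k)}_nV_n$ collapses to $p^{(k)}_{1N,n}\,(V_n)_{NN}=p^{(k)}_{1N,n}\,b_{NN}$, because in column $N$ only the diagonal entry survives. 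Subtracting reproduces (\ref{DxP1N}).

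Next I would use the hypotheses to annihilate the scalar factor $D_n^{k+1}b_{11}-b_{NN}$. The condition $D_nb_{11}=b_{11}$ says $b_{11}$ is invariant under the shift $D_n$, so iterating it gives $D_n^{k+1}b_{11}=b_{11}$ for every $k\geq 0$. Combined with $b_{11}=b_{NN}$ this gives $D_n^{k+1}b_{11}-b_{NN}=b_{11}-b_{NN}=0$. Substituting into (\ref{DxP1N}) produces $D_xp^{(k)}_{1N,n}=0$, which is the assertion.

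The only point that requires a moment of care, rather than being automatic, is that $p^{(k)}_{1N,n}$ is a legitimate candidate for an $x$-integral, that is, a function of $x$, $n$ and the variables $u_n,u_{n\pm1},\ldots$ alone, with no dependence on the jet variables $u_{n,x},u_{n,xx},\ldots$. This holds because the entries of $U_n$ in (\ref{U}) carry no $x$-derivatives (all $x$-dependence resides in $V_n$), so the product $P^{(k)}_n$, and in particular its $(1,N)$ entry, depends only on $u_n,u_{n+1},\ldots,u_{n+k+1}$. Because the heavy lifting of deriving (\ref{DxP}) and (\ref{DxP1N}) is already in place, I expect no further obstacle: the lemma is essentially an immediate corollary of (\ref{DxP1N}) under the stated constraints.
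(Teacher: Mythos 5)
Your proof is correct and follows the same route as the paper: the paper's own proof is the one-line observation that under the hypotheses $D_n b_{11}=b_{11}$ and $b_{11}=b_{NN}$ the factor $D_n^{k+1}b_{11}-b_{NN}$ in (\ref{DxP1N}) vanishes, so $D_x p^{(k)}_{1N,n}=0$. Your additional checks (re-deriving (\ref{DxP1N}) from the triangular structure, and noting that $p^{(k)}_{1N,n}$ involves no $x$-derivatives so it is a legitimate $x$-integral candidate) are sound but simply make explicit what the paper treats as already established.
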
 

\begin{proof} 
In such a case (\ref{DxP1N}) implies $D_xp^{(k)}_{1N,n}=0$.
\end{proof}

For constructing $n$-integrals we use the higher order derivatives of the equation system (\ref{x}) with respect to $x$. Evidently they solve linear systems as
\begin{align*}
&\Phi_{n,xx}=(V_{n,x}+V_n^2)\Phi_n, \\
&\Phi_{n,xxx}=(V_{n,xx}+2V_nV_{n,x}+V_{n}V_{n,x}+V_n^3))\Phi_n.
\end{align*}
For arbitrary $k$ we find
\begin{align*}
D_x^k(\Phi_{n})=R^{(k)}_n\left(V_{n},V_{n,x},\ldots,D_x^{(k-1)}(V_{n})\right)\Phi_n,
\end{align*}
where $R^{(k)}_n$ is a polynomial with constant coefficients on all of its arguments 
\begin{equation*} \label{R}
R^{(k)}_n=\left(\begin{array}{cccc}
r^{(k)}_{11,n}&0&\dots &0 \\
r^{(k)}_{21,n}&r^{(k)}_{22,n}&\dots &0 \\
\dots &\dots&\dots& \dots \\
r^{(k)}_{N1,n}&r^{(k)}_{N2,n}&\dots &r^{(k)}_{NN,n}
\end{array} \right).
\end{equation*}

By applying the shift operator $D_n$ to the latter equation we obtain
\begin{equation}\label{Dxk1}
D_x^k\Phi_{n+1}=D_n(R^{(k)}_n)U_n\Phi_n.
\end{equation}
On the other hand by differentiating (\ref{n}) with respect to $x$ $k$ times we find
\begin{equation}\label{Dxk2}
D_x^k\Phi_{n+1}=\sum_{j=0}^{k}c^j_kD_x^j(U_n)  (R^{(k-j)}_n)\Phi_n.
\end{equation}
Equations (\ref{Dxk1}) and (\ref{Dxk2}) evidently imply
\begin{equation}\label{Dxk12}
D_n(R^{(k)}_n)U_n=\sum_{j=0}^{k}c^j_kD_x^j(U_n)  (R^{(k-j)}_n).
\end{equation}
Let us  pass in the equality (\ref{Dxk12}),  to the matrix elements located at the left lower corner. As a result we arrive at the equation
\begin{equation}\label{DnN1}
D_n(r^{(k)}_{N1,n})a_{11}=a_{NN}r^{(k)}_{N1,n}+a_{NN,x}r^{(k-1)}_{N1,n}+a_{NN,xx}r^{(k-2)}_{N1,n}+\cdots +D^{(k)}_x(a_{NN}).
\end{equation}

\begin{lemma} \label{LemRk-n}
If the entries of the matrix (\ref{U}) satisfy the conditions $D_xa_{11}=0$ and $a_{11}=a_{NN}$ then function $I=r^{(k)}_{N1,n}$ is an $n$-integral.
\end{lemma}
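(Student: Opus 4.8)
The plan is to read off the conclusion directly from the scalar identity~(\ref{DnN1}), in exactly the way Lemma~\ref{LemP-x} was obtained from~(\ref{DxP1N}). The substantive work---deriving~(\ref{DnN1}) by extracting the lower-left matrix element from the operator identity~(\ref{Dxk12})---has already been carried out, so what remains is to substitute the two hypotheses into that identity and simplify.

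First I would translate the hypotheses into statements about the $x$-derivatives of the diagonal entries that appear on the right-hand side of~(\ref{DnN1}). The assumption $a_{11}=a_{NN}$ combined with $D_xa_{11}=0$ forces $D_xa_{NN}=0$, and hence $a_{NN,x}=a_{NN,xx}=\cdots=D_x^{(k)}(a_{NN})=0$ at every order $\ge1$. Consequently every term on the right-hand side of~(\ref{DnN1}) except the first one, $a_{NN}\,r^{(k)}_{N1,n}$, carries at least one $x$-derivative of $a_{NN}$ and therefore vanishes, leaving
\begin{equation*}
D_n\bigl(r^{(k)}_{N1,n}\bigr)\,a_{11}=a_{NN}\,r^{(k)}_{N1,n}=a_{11}\,r^{(k)}_{N1,n}.
\end{equation*}
Using that $a_{11}\neq0$---the diagonal entries of the transition matrix $U_n$ cannot vanish if the linear system~(\ref{n}) is to be invertible in $n$---I would cancel the common factor $a_{11}$ to obtain $D_n\bigl(r^{(k)}_{N1,n}\bigr)=r^{(k)}_{N1,n}$, which is precisely the defining relation $D_nI=I$ of an $n$-integral for $I=r^{(k)}_{N1,n}$.

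The only point requiring a little care---and the closest thing to an obstacle in an otherwise direct argument---is verifying that $r^{(k)}_{N1,n}$ has the admissible functional form of an $n$-integral, namely that it depends only on $x,n,u_n,u_{n,x},u_{n,xx},\ldots$ and not on the $n$-shifts $u_{n\pm1},u_{n\pm2},\ldots$. This follows from the structure of the Lax pair: since $V_n$ is built from $u_n$ and $u_{n,x}$ alone, and $R^{(k)}_n$ is a fixed polynomial in $V_n,V_{n,x},\ldots,D_x^{k-1}(V_n)$, each entry $r^{(k)}_{N1,n}$ is a function of $u_n$ and its $x$-derivatives only. Hence $I=r^{(k)}_{N1,n}$ qualifies as an $n$-integral, which completes the proof.
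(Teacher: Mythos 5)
Your proposal is correct and follows the paper's own route: the paper's proof consists of the single remark that the lemma ``follows right away from equation~(\ref{DnN1})'', and your argument is exactly the spelled-out version of that remark --- the hypotheses kill all the $x$-derivative terms $a_{NN,x}r^{(k-1)}_{N1,n},\ldots,D_x^{(k)}(a_{NN})$ on the right of~(\ref{DnN1}), and cancelling the nonzero factor $a_{11}$ yields $D_n\bigl(r^{(k)}_{N1,n}\bigr)=r^{(k)}_{N1,n}$. Your additional check that $r^{(k)}_{N1,n}$ depends only on $u_n$ and its $x$-derivatives (so that it has the admissible form of an $n$-integral) is a sensible point the paper leaves implicit, but it does not change the substance of the argument.
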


\begin{proof} 
Proof follows right away from equation  (\ref{DnN1}).
\end{proof}

\subsection{Reductions of the equation (\ref{eq1})}
To find the desired degenerate boundary condition for (\ref{eq1}) we make the following change of the variables:
\begin{align*}
&u_n^j=v_n^j-\log \varepsilon\quad \mbox{for}\quad j>0,\quad \varepsilon>0,\\
&u_n^0=v_n^0,\\
&u_n^j=v_n^j+\log \varepsilon\quad \mbox{for}\quad j<0. 
\end{align*}
Then (\ref{eq1}) converts into
\begin{align*}
&v_{n+1,x}^0=v_{n,x}^0+\varepsilon e^{v^{0}_{n+1}-v^{1}_{v}}-\varepsilon e^{v^{-1}_{n+1}-v^{0}_{n}},\\
&v^{1}_{n+1,x}=v^{1}_{n,x}+e^{v^{1}_{n+1}-v^{2}_{n}}-\varepsilon e^{v^{0}_{n+1}-v^{1}_{n}},\\
&v^{-1}_{n+1,x}=v^{-1}_{n,x}+\varepsilon e^{v^{-1}_{n+1}-v^{0}_{n}}- e^{v^{-2}_{n+1}-v^{-1}_{n}},\\
&v^{j}_{n+1,x}=v^{j}_{n,x}+e^{v^{j}_{n+1}-v^{j+1}_{n}}-e^{v^{j-1}_{n+1}-v^{j}_{n}}, \quad  \mbox{for}\quad |j|>1. 
\end{align*}
Now we take the limit for $\varepsilon\rightarrow0$ and obtain an equation $v^{0}_{n+1,x}=v^{0}_{n,x}$ and two semi-infinite lattices which are not related to each other:
\begin{align*}
&v^{1}_{n+1,x}=v^{1}_{n,x}+e^{v^{1}_{n+1}-v^{2}_{n}},\\
&v^{j}_{n+1,x}=v^{j}_{n,x}+e^{v^{j}_{n+1}-v^{j+1}_{n}}-e^{v^{j-1}_{n+1}-v^{j}_{n}}, \quad  \mbox{for}\quad j\geq2, 
\end{align*}
and
\begin{align*}
&v^{-1}_{n+1,x}=v^{-1}_{n,x}- e^{v^{-2}_{n+1}-v^{-1}_{n}},\\
&v^{j}_{n+1,x}=v^{j}_{n,x}+e^{v^{j}_{n+1}-v^{j+1}_{n}}-e^{v^{j-1}_{n+1}-v^{j}_{n}}, \quad  \mbox{for}\quad j\leq -2.
\end{align*}
By applying this manipulation to the lattice at two fixed points $j=-1$ and $j=N+1$ we obtain a finite field system of the form (\ref{eq00}) which is a desired reduction of the equation (\ref{eq1}):
\begin{align}
\begin{cases}\label{sys1}
u^{0}_{n+1,x}=u^{0}_{n,x}+e^{u^{0}_{n+1}-u^{1}_{n}},\\
u^{j}_{n+1,x}=u^{j}_{n,x}+e^{u^{j}_{n+1}-u^{j+1}_{n}}-e^{u^{j-1}_{n+1}-u^{j}_{n}}, \quad 1 < j < N-1, \\
u^{N}_{n+1,x}=u^{N}_{n,x}-e^{u^{N-1}_{n+1}-u^{N}_{n}}.
\end{cases}
\end{align}
In \cite{Smirnov2015} it was proved that system (\ref{sys1}) is integrable in the sense of Darboux for arbitrary natural $N$.
We rewrite the Lax pair (\ref{Laxeq1}) by means of the obtained boundary conditions 

\begin{align*} 
\begin{cases}
\varphi^{0}_{n+1}=-\varphi^{1}_{n}+e^{u^{0}_{n}-u^{0}_{n+1}}\varphi^{0}_{n}, \\
\varphi^{j}_{n+1}=-\varphi^{j+1}_{n}+e^{u^{j}_{n}-u^{j}_{n+1}}\varphi^{j}_{n}, \quad 1 < j < N-1, \\
\varphi^{N}_{n+1}=e^{u^{N}_{n}-u^{N}_{n+1}}\varphi^{N}_{n},
\end{cases} \quad
\begin{cases}
\varphi^{0}_{n,x}=0,\\
\varphi^{j}_{n,x}=-e^{u^{j-1}_{n}-u^{j}_{n}}\varphi^{j-1}_{n}, \quad 1 < j < N-1, \\
\varphi^{N}_{n,x}=-e^{u^{N-1}_{n}-u^{N}_{n}}\varphi^{N-1}_{n}.
\end{cases}
\end{align*}
When deriving it from (\ref{Laxeq1}) we set $\varphi^{-1}_n=0$,  $\varphi^{N+1}_n=0$. Let us study in more details the case $N=2$ that corresponds to the system
\begin{align}
\begin{cases}\label{sys1N2}
u^{0}_{n+1,x}=u^{0}_{n,x}+e^{u^{0}_{n+1}-u^{1}_{n}},\\
u^{1}_{n+1,x}=u^{1}_{n,x}+e^{u^{1}_{n+1}-u^{2}_{n}}-e^{u^{0}_{n+1}-u^{1}_{n}}, \\
u^{2}_{n+1,x}=u^{2}_{n,x}-e^{u^{1}_{n+1}-u^{2}_{n}}
\end{cases}
\end{align}
admitting the  Lax pair
\begin{align} \label{Laxsys1N2}
\begin{cases}
\varphi^{0}_{n+1}=-\varphi^{1}_{n}+e^{u^{0}_{n}-u^{0}_{n+1}}\varphi^{0}_{n}, \\
\varphi^{1}_{n+1}=-\varphi^{2}_{n}+e^{u^{1}_{n}-u^{1}_{n+1}}\varphi^{1}_{n}, \\
\varphi^{2}_{n+1}=e^{u^{2}_{n}-u^{2}_{n+1}}\varphi^{2}_{n},
\end{cases} \quad
\begin{cases}
\varphi^{0}_{n,x}=0,\\
\varphi^{1}_{n,x}=-e^{u^{0}_{n}-u^{1}_{n}}\varphi^{0}_{n}, \\
\varphi^{2}_{n,x}=-e^{u^{1}_{n}-u^{2}_{n}}\varphi^{1}_{n}.
\end{cases}
\end{align}
We give here $x$-integrals and $n$-integrals of system (\ref{sys1N2}) found earlier in  \cite{Smirnov2015}
\begin{align*} 
&J_1=e^{u^0_{n+2}-u^0_{n+3}}+e^{u^1_{n+1}-u^1_{n+2}}+e^{u^2_{n}-u^2_{n+1}};\\
&J_2=e^{u^0_{n+1}-u^0_{n+2}+u^1_{n+1}-u^1_{n+2}}+e^{u^0_{n+1}-u^0_{n+2}+u^2_{n}-u^2_{n+1}}+e^{u^1_{n+1}-u^1_{n+2}+u^2_{n}-u^2_{n+1}};\\
&J_3=e^{u^0_{n}-u^0_{n+1}+u^1_{n}-u^1_{n+1}+u^2_{n}-u^2_{n+1}}
\end{align*}
and, respectively
\begin{align*} 
&I_1=u^{0}_{n,x}+u^{1}_{n,x}+u^{2}_{n,x};\\
&I_2=u^{1}_{n,xx}+2u^{0}_{n,xx}+u^{0}_{n,x}u^{1}_{n,x}+u^{0}_{n,x}u^{2}_{n,x}+u^{1}_{n,x}u^{2}_{n,x};\\
&I_3=u^{0}_{n,xxx}+u^{0}_{n,x}u^{1}_{n,xx}+u^{0}_{n,xx}u^{1}_{n,x}+u^{0}_{n,xx}u^{2}_{n,x}+u^{0}_{n,x}u^{1}_{n,x}u^{2}_{n,x}.
\end{align*}

Note that they can be readily derived also from the Lax pair (\ref{Laxsys1N2}) due to Lemma~\ref{LemRk-n} and Lemma~\ref{LemP-x} above.

\subsection{Reductions of the equation (\ref{eq2})}
By applying the manipulations similar to that fulfilled in the previous section by using the same change of the variables one can show that lattice (\ref{eq2}) is reduced to the following finite field system which is of the form (\ref{eq00})
\begin{align*}
\begin{cases}
u^{0}_{n+1,x}=u^{0}_{n,x}-\frac{e^{u^{0}_{n}}}{e^{u^{1}_{n}}}+\frac{e^{u^{0}_{n+1}}}{e^{u^{1}_{n}}},\\
u^{j}_{n+1,x}=u^{j}_{n,x}+\frac{e^{u^{j-1}_{n+1}}}{e^{u^{j}_{n+1}}}-\frac{e^{u^{j-1}_{n+1}}}{e^{u^{j}_{n}}}-\frac{e^{u^{j}_{n}}}{e^{u^{j+1}_{n}}}+\frac{e^{u^{j}_{n+1}}}{e^{u^{j+1}_{n}}}, \quad 1 < j < N-1, \\
u^{N}_{n+1,x}=u^{N}_{n,x}+\frac{e^{u^{N-1}_{n+1}}}{e^{u^{N}_{n+1}}}-\frac{e^{u^{N-1}_{n+1}}}{e^{u^{N}_{n}}}
\end{cases}
\end{align*}
and its Lax pair is obtained from the system (\ref{Laxeq2}) by imposing additional conditions $\varphi^{-1}_n= 0$,  $\varphi^{N+1}_n=0$:
\begin{align*}
&\begin{cases}
\varphi^{0}_{n+1}=-e^{-u^{0}_{n+1}+u^{0}_{n}}\left(\varphi^{1}_{n}-\varphi^{0}_{n}\right)+\varphi^{1}_{n},\\
\varphi^{j}_{n+1}=-e^{-u^{j}_{n+1}+u^{j}_{n}}\left(\varphi^{j+1}_{n}-\varphi^{j}_{n}\right)+\varphi^{j+1}_{n},\quad 1 < j < N-1,\\
\varphi^{N}_{n+1}=e^{-u^{N}_{n+1}+u^{N}_{n}}\varphi^{N}_{n},
\end{cases}\\
&\begin{cases}
\varphi^{0}_{n,x}=0,\\
\varphi^{j}_{n,x}=e^{u^{j-1}_{n}-u^{j}_{n}}\left(\varphi^{j-1}_{n}-\varphi^{j}_{n}\right),\quad 1 < j < N-1,\\
\varphi^{N}_{n,x}=e^{u^{N-1}_{n}-u^{N}_{n}}\left(\varphi^{N-1}_{n}-\varphi^{N}_{n}\right).
\end{cases}
\end{align*}
We will investigate in more detail the case when $N=1$. Then we obtain a system
\begin{align}\label{sys2N1}
\begin{cases}
u^{0}_{n+1,x}=u^{0}_{n,x}-e^{u^{0}_{n}-u^{1}_{n}}+e^{u^{0}_{n+1}-u^{1}_{n}},\\
u^{1}_{n+1,x}=u^{1}_{n,x}+e^{u^{0}_{n+1}-u^{1}_{n+1}}-e^{u^{0}_{n+1}-u^{1}_{n}}
\end{cases}
\end{align}
and its Lax pair
\begin{align}\label{Laxsys2N1}
\begin{cases}
\varphi^{0}_{n+1}=-e^{-u^{0}_{n+1}+u^{0}_{n}}\left(\varphi^{1}_{n}-\varphi^{0}_{n}\right)+\varphi^{1}_{n},\\
\varphi^{1}_{n+1}=e^{-u^{1}_{n+1}+u^{1}_{n}}\varphi^{1}_{n},
\end{cases} \quad
\begin{cases}
\varphi^{0}_{n,x}=0,\\
\varphi^{1}_{n,x}=e^{u^{0}_{n}-u^{1}_{n}}\left(\varphi^{0}_{n}-\varphi^{1}_{n}\right).
\end{cases}
\end{align}
$x$-integrals and $n$-integrals of system (\ref{sys2N1}) have the form 
\begin{equation} \label{x-sys2N1}
\begin{aligned} 
&J_1=\left(e^{u^{1}_{n+1}-u^{1}_{n}}-1\right)\left(e^{u^{0}_{n+1}-u^{0}_{n}}-1\right);\\
&J_2=\left(e^{u^{1}_{n}-u^{1}_{n+1}}-1\right)\left(e^{u^{0}_{n+1}-u^{0}_{n+2}}-1\right)
\end{aligned} 
\end{equation}
and, respectively
\begin{equation} \label{n-sys2N1}
\begin{aligned} 
&I_1=u^{0}_{n,x}+u^{1}_{n,x}-e^{u^{0}_{n}-u^{1}_{n}};\\
&I_2=u^{0}_{n,xx}+u^{0}_{n,x}u^{1}_{n,x}-e^{u^{0}_{n}-u^{1}_{n}}u^{0}_{n,x}.
\end{aligned} 
\end{equation}

Let us briefly discuss the methods of searching the integrals. Note that for some cases the Lax pair gives a convenient tool for solving the problem, however this way is applied not always. Therefore in the other cases we use algebraic method. We explain both approaches with the example of the system (\ref{sys2N1}). For constructing the $x$-integrals we use the method of characteristic algebras and we use the Lax pair for finding the $n$-integrals.

Assume that function $H(u^{0}_{n},u^{1}_{n},u^{0}_{n+1},u^{1}_{n+1},\ldots)$ is an $x$--integral for the system (\ref{sys2N1}). Then according to the definition the relation 
\begin{equation*}\label{sys2DxH}
D_xH(u^{0}_{n},u^{1}_{n},u^{0}_{n+1},u^{1}_{n+1},\ldots)=0
\end{equation*}
should hold. Due to the chain rule equation (\ref{sys2DxH}) implies
\begin{equation*}\label{sys2K0H}
K_0H=0,
\end{equation*}
where 
\begin{equation*}\label{sys2K0}
K_0=u^{0}_{n,x}\frac{\partial}{\partial u_{n}^{0}}+u^{1}_{n,x}\frac{\partial}{\partial u_{n}^{1}}+u_{n+1,x}^{0}\frac{\partial}{\partial u_{n+1}^{0}}+u_{n+1,x}^{1}\frac{\partial}{\partial u_{n+1}^{1}}+ u_{n-1,x}^{0} \frac{\partial}{\partial u^{0}_{n-1}}+u_{n-1,x}^{1} \frac{\partial}{\partial u^{1}_{n-1}} +\ldots.
\end{equation*}
We exclude the terms like $u^{0}_{n\pm i,x}$, $u^{1}_{n\pm i,x}$ due to the system (\ref{sys2N1}) and get 
\begin{align*}\label{sys2K0fj}
K_0=&u^{0}_{n,x}\frac{\partial}{\partial u_{n}^{0}}+u^{1}_{n,x}\frac{\partial}{\partial u_{n}^{1}}+\left(u^{0}_{n,x}-e^{u^{0}_{n}-u^{1}_{n}}+e^{u^{0}_{n+1}-u^{1}_{n}}\right)\frac{\partial}{\partial u_{n+1}^{0}}\nonumber\\
&+\left(u^{1}_{n,x}+e^{u^{0}_{n+1}-u^{1}_{n+1}}-e^{u^{0}_{n+1}-u^{1}_{n}}\right)\frac{\partial}{\partial u_{n+1}^{1}} +\ldots.
\end{align*}
Obviously equation $K_0H=0$ is overdetermined since the coefficients of the equation depend on $u^{0}_{n,x}$ and $u^{1}_{n,x}$ while the solution $H$ does not depend on these variables, in other words we are interested on the solution $H$ of the equation (\ref{sys2K0H}) which solves in addition two more equations
\begin{equation*}\label{sys2X1X2}
X_1H=0 \quad \mbox{and} \quad X_2H=0,
\end{equation*}
where $X_1=\frac{\partial}{\partial u^{0}_{n,x}}$, $X_2=\frac{\partial}{\partial u^{1}_{n,x}}$. More precisely we get a system of the first order linear partial differential equations for one and the same unknown $H$
\begin{equation}\label{sys2K0X1X2}
\begin{aligned}
&K_0H=0, \\
&X_1H=0,  \\
&X_2H=0.  
\end{aligned}
\end{equation}
Evidently operator $K_0$ is represented as a linear combination of the vector fields
\begin{equation*}\label{sys2KY1Y2W}
K_0=u^{0}_{n,x}Y_1+u^{1}_{n,x}Y_2+W,
\end{equation*}
where the coefficients of the operators $Y_1$, $Y_2$, $W$ do not depend on the variables $u^{0}_{n,x}$, $u^{1}_{n,x}$:
\begin{align*}
&Y_1=\frac{\partial}{\partial u^{0}_{n}}+\frac{\partial}{\partial u^{0}_{n+1}}+\frac{\partial}{\partial u^{0}_{n+2}}, \\
&Y_2=\frac{\partial}{\partial u^{1}_{n}}+\frac{\partial}{\partial u^{1}_{n+1}},  \\
&W=\left(e^{u^{0}_{n+1}-u^{1}_{n}}-e^{u^{0}_{n}-u^{1}_{n}}\right)\frac{\partial}{\partial u^{0}_{n+1}}+
\left(e^{u^{0}_{n+1}-u^{1}_{n+1}}-e^{u^{0}_{n+1}-u^{1}_{n}}\right)\frac{\partial}{\partial u^{1}_{n+1}}\\
& \qquad + \left(e^{u^{0}_{n+2}-u^{1}_{n+1}}-e^{u^{0}_{n+1}-u^{1}_{n+1}}+e^{u^{0}_{n+1}-u^{1}_{n}}-e^{u^{0}_{n}-u^{1}_{n}}\right)\frac{\partial}{\partial u^{0}_{n+2}}. 
\end{align*}
Thus (\ref{sys2K0X1X2}) is reduced to the form
\begin{equation}\label{sys2Y1Y2W}
\begin{aligned}
&Y_1H=0, \\
&Y_2H=0,  \\
&WH=0.  
\end{aligned}
\end{equation}
It is checked that system (\ref{sys2Y1Y2W}) is closed, i.e. all of the commutators $\left[Y_1,Y_2\right]$, $\left[Y_1,W\right]$ and $\left[Y_2,W\right]$ are linearly expressed in terms of $Y_1$, $Y_2$, $W$ such that:
\begin{equation*}\label{sys2commY1Y2W}
\begin{aligned}
&\left[Y_1,Y_2\right]=0, \\
&\left[Y_1,W\right]=W,  \\
&\left[Y_2,W\right]=-W.  
\end{aligned}
\end{equation*}
Since the system contains three equations then in order to get two functionally independent solutions we look for a solution depending on five variables
\begin{equation*}\label{sys2H}
H=H\left(u^{0}_{n},u^{1}_{n},u^{0}_{n+1},u^{1}_{n+1},u^{0}_{n+2}\right).
\end{equation*}
Then system (\ref{sys2Y1Y2W}) turns into
\begin{align}\label{sys2sys1}
\begin{aligned}
&H_{u^{0}_{n}}+H_{u^{0}_{n+1}}+H_{u^{0}_{n+2}}=0, \\
&H_{u^{1}_{n}}+H_{u^{1}_{n+1}}=0,  \\
&\left(e^{u^{0}_{n+1}-u^{1}_{n}}-e^{u^{0}_{n}-u^{1}_{n}}\right)H_{u^{0}_{n+1}}+
\left(e^{u^{0}_{n+1}-u^{1}_{n+1}}-e^{u^{0}_{n+1}-u^{1}_{n}}\right)H_{u^{1}_{n+1}}\\
& \qquad + \left(e^{u^{0}_{n+2}-u^{1}_{n+1}}-e^{u^{0}_{n+1}-u^{1}_{n+1}}+e^{u^{0}_{n+1}-u^{1}_{n}}-e^{u^{0}_{n}-u^{1}_{n}}\right)H_{u^{0}_{n+2}}=0. 
\end{aligned}
\end{align}
In order to solve the system we have to reduce it due to Jacobi method to a normal (triangular) form
\begin{align*}
&\left(e^{u^{0}_{n+1}-u^{1}_{n}}-e^{u^{0}_{n}-u^{1}_{n}}\right)H_{u^{0}_{n}}-
\left(e^{u^{0}_{n+1}-u^{1}_{n+1}}-e^{u^{0}_{n+1}-u^{1}_{n}}\right)H_{u^{1}_{n+1}}\\
&\qquad \qquad \qquad -\left(e^{u^{0}_{n+2}-u^{1}_{n+1}}-e^{u^{0}_{n+1}-u^{1}_{n+1}}\right)H_{u^{0}_{n+2}}=0, \\
&H_{u^{1}_{n}}+H_{u^{1}_{n+1}}=0,  \\
&\left(e^{u^{0}_{n+1}-u^{1}_{n}}-e^{u^{0}_{n}-u^{1}_{n}}\right)H_{u^{0}_{n+1}}+
\left(e^{u^{0}_{n+1}-u^{1}_{n+1}}-e^{u^{0}_{n+1}-u^{1}_{n}}\right)H_{u^{1}_{n+1}}\\
& \qquad \qquad \qquad + \left(e^{u^{0}_{n+2}-u^{1}_{n+1}}-e^{u^{0}_{n+1}-u^{1}_{n+1}}+e^{u^{0}_{n+1}-u^{1}_{n}}-e^{u^{0}_{n}-u^{1}_{n}}\right)H_{u^{0}_{n+2}}=0. 
\end{align*}
Then by applying the consecutive integration according to the Jacobi algorithm we get the result (\ref{x-sys2N1}).

To construct the complete set of the $n$-integrals for the system (\ref{sys2N1}) we use the Lax pair. At first we make a change of the variables 
\begin{equation*}
\varphi^0_n=e^{-u^0_n}\psi^0_n, \quad \varphi^1_n=e^{-u^1_n}\psi^1_n
\end{equation*}
in the system (\ref{Laxsys2N1}) and get a new form of the Lax pair
\begin{equation}\label{Laxsys2UV}
\Psi_{n,x}=V_n\Psi_{n}, \quad \Psi_{n+1}=U_n\Psi_{n},
\end{equation}
where $\Psi_{n}=\left(\psi^0_n,\psi^1_n\right)^T$ and 
\begin{equation*}\label{sys2UV}
V_n=\left(\begin{array}{cc}
u^0_{n,x} & 0 \\
1 & u^1_{n,x}-e^{u^0_{n}-u^1_{n}}
\end{array} \right), \qquad
U_n=\left(\begin{array}{cc}
1 & e^{u^0_{n+1}-u^1_n}-e^{u^0_n-u^1_n} \\
0 & 1
\end{array} \right).
\end{equation*}
Now the Lax pair (\ref{Laxsys2UV}) satisfies all the requests of the Lemma~\ref{LemRk-n}. Therefore function $r^{(k)}_{N1,n}$ is an $n$-integral. In this case we have 
\begin{align*}
&r^{(2)}_{21,n}=u^{0}_{n,x}+u^{1}_{n,x}-e^{u^{0}_{n}-u^{1}_{n}},\\
&r^{(3)}_{21,n}=u^{0}_{n,xx}+2u^{1}_{n,xx}+(u^{0}_{n,x})^2+u^{0}_{n,x}u^{1}_{n,x}+(u^{1}_{n,x})^2-3e^{u^{0}_{n}-u^{1}_{n}}u^{0}_{n,x}+e^{2\left(u^{0}_{n}-u^{1}_{n}\right)}.
\end{align*}
After some slight simplifications we find integrals (\ref{n-sys2N1}).

\subsection{Reductions of the equation (\ref{eq3})}
In equation (\ref{eq3}) we put $u^0_n=c_0, \quad u_n^{N+1}=c_N:$
\begin{align*}
\begin{cases}
u^{1}_{n+1,x}=u^{1}_{n,x}\frac{\left(u^{1}_{n+1}\right)^2}{c_0u^{2}_{n}},\\
u^{j}_{n+1,x}=u^{j}_{n,x}\frac{\left(u^{j}_{n+1}\right)^2}{u^{j-1}_{n+1}u^{j+1}_{n}}, \quad 1 < j < N-1, \\
u^{N}_{n+1,x}=u^{N}_{n,x}\frac{\left(u^{N}_{n+1}\right)^2}{u^{N-1}_{n+1}c_N}.
\end{cases}
\end{align*}
We will rewrite the Lax pair (\ref{Laxeq3}) with the above constraint for the field variables and additional constraint for the eigenfunctions $\varphi^{-1}_n= 0$, $\varphi^{N+1}_n= 0$
\begin{align*}
\begin{cases}
\varphi^{0}_{n+1}=-\frac{c_0}{u^{1}_{n}}\left(\varphi^{0}_{n}-\varphi^{1}_{n}\right),\\
\varphi^{j}_{n+1}=-\frac{u^{j}_{n+1}}{u^{j+1}_{n}}\left(\varphi^{j}_{n}-\varphi^{j+1}_{n}\right),\quad 1 < j < N-1,\\
\varphi^{N}_{n+1}=-\frac{u^{N}_{n+1}}{c_N}\varphi^{N}_{n},
\end{cases}\quad 
\begin{cases}
\varphi^{0}_{n,x}=0,\\
\varphi^{j}_{n,x}=-\frac{u^{j}_{n,x}}{u^{j}_{n}}\left(\varphi^{j-1}_{n}-\varphi^{j}_{n}\right),\quad 1 < j < N-1,\\
\varphi^{N}_{n,x}=-\frac{u^{N}_{n,x}}{u^{N}_{n}}\left(\varphi^{N-1}_{n}-\varphi^{N}_{n}\right).
\end{cases}
\end{align*}
We put $N=1$. Then we obtain an equation
\begin{align}\label{sys3N1}
u^{1}_{n+1,x}=u^{1}_{n,x}\left(u^{1}_{n+1}\right)^2
\end{align}
having the Lax pair
\begin{align*}
\begin{cases}
\varphi^{0}_{n+1}=-\frac{1}{u^{1}_{n}}\left(\varphi^{0}_{n}-\varphi^{1}_{n}\right),\\
\varphi^{1}_{n+1}=-u^{1}_{n+1}\varphi^{1}_{n},
\end{cases} \quad
\begin{cases}
\varphi^{0}_{n,x}=0,\\
\varphi^{1}_{n,x}=-\frac{u^{1}_{n,x}}{u^{1}_{n}}\left(\varphi^{0}_{n}-\varphi^{1}_{n}\right).
\end{cases}
\end{align*}
$x$-integrals and $n$-integrals of equation (\ref{sys3N1}) have the form 
\begin{align*} 
J=u^1_n+\frac{1}{u^1_{n+1}}
\end{align*}
and, respectively
\begin{align*} 
I=\frac{u^{1}_{n,xxx}}{u^{1}_{n,x}}-\frac{3}{2}\frac{(u^{1}_{n,xx})^2}{(u^{1}_{n,x})^2}.
\end{align*}

\subsection{Reductions of the equation (\ref{eq4})}
In the equation (\ref{eq4}) we put $u^0_n=c_1$, $u_n^{N+1}=c_N$ and obtain a system
\begin{align*}
\begin{cases}
u^{1}_{n+1,x}=u^{1}_{n,x}\frac{\left(u^{1}_{n+1}-u^{2}_{n}\right)}{c_1-u^{1}_{n}},\\
u^{j}_{n+1,x}=u^{j}_{n,x}\frac{\left(u^{j}_{n+1}-u^{j+1}_{n}\right)}{\left(u^{j-1}_{n+1}-u^{j}_{n}\right)}, \quad 2 < j < N-1, \\
u^{N}_{n+1,x}=u^{N}_{n,x}\frac{u^{N}_{n+1}-c_N}{\left(u^{N-1}_{n+1}-u^{N}_{n}\right)}
\end{cases}
\end{align*}
with the Lax pair
\begin{align*}
\begin{cases}
\varphi^{0}_{n+1}=\left(u^{1}_{n}-c_1\right)\varphi^{0}_{n}+\varphi^{1}_{n},\\
\varphi^{j}_{n+1}=\left(u^{j+1}_{n}-u^{j}_{n+1}\right)\varphi^{j}_{n}+\varphi^{j+1}_{n},\quad 1 < j < N-1,\\
\varphi^{N}_{n+1}=\left(c_N-u^{N}_{n+1}\right)\varphi^{N}_{n},
\end{cases}\quad
\begin{cases}
\varphi^{0}_{n,x}=0,\\
\varphi^{j}_{n,x}=-u^{j}_{n,x}\varphi^{j-1}_{n},\quad 1 < j < N-1,\\
\varphi^{N}_{n,x}=-u^{N}_{n,x}\varphi^{N-1}_{n}.
\end{cases}
\end{align*}
In the particular case $N=2$ we get a system 
\begin{align}\label{sys4N2}
\begin{cases}
u^{1}_{n+1,x}=u^{1}_{n,x}\frac{\left(u^{1}_{n+1}-u^{2}_{n}\right)}{c_1-u^{1}_{n}},\\
u^{2}_{n+1,x}=u^{2}_{n,x}\frac{u^{2}_{n+1}-c_2}{\left(u^{1}_{n+1}-u^{2}_{n}\right)}
\end{cases}
\end{align}
with the Lax pair
\begin{align}\label{Laxsys4N2}
\begin{cases}
\varphi^{0}_{n+1}=\left(u^{1}_{n}-c_1\right)\varphi^{0}_{n}+\varphi^{1}_{n},\\
\varphi^{1}_{n+1}=\left(u^{2}_{n}-u^{1}_{n+1}\right)\varphi^{1}_{n}+\varphi^{2}_{n},\\
\varphi^{2}_{n+1}=\left(c_2-u^{2}_{n+1}\right)\varphi^{2}_{n},
\end{cases} \quad
\begin{cases}
\varphi^{0}_{n,x}=0,\\
\varphi^{1}_{n,x}=-u^{1}_{n,x}\varphi^{0}_{n},\\
\varphi^{2}_{n,x}=-u^{2}_{n,x}\varphi^{1}_{n}.
\end{cases}
\end{align}
$x$-integrals and $n$-integrals of system (\ref{sys4N2}) have the form 
\begin{equation} \label{sys4-x-int}
\begin{aligned} 
&J_1=\left(u^{1}_{n}-c_1\right)\left(u^{2}_{n+1}-c_2\right)\left(u^{2}_{n}-u^{1}_{n+1}\right);\\
&J_2=u^{1}_{n+1}u^{1}_{n+2}-c_1u^{1}_{n+2}-u^{1}_{n+1}u^{2}_{n+1}-c_2u^{2}_{n}+u^{2}_{n}u^{2}_{n+1}
\end{aligned}
\end{equation}
and, respectively
\begin{equation} \label{sys4-n-int}
\begin{aligned} 
I_1=&\frac{u^{1}_{n,xxx}}{u^{1}_{n,x}}+\frac{u^{2}_{n,xxx}}{u^{2}_{n,x}}-\frac{4}{3}\left(\frac{u^{1}_{n,xx}}{u^{1}_{n,x}}\right)^2-\frac{1}{3}\frac{u^{1}_{n,xx}u^{2}_{n,xx}}{u^{1}_{n,x}u^{2}_{n,x}}-\frac{4}{3}\left(\frac{u^{2}_{n,xx}}{u^{2}_{n,x}}\right)^2;\\
I_2=&\frac{u^{1}_{n,xxxx}}{u^{1}_{n,x}}
-\frac{13}{3}\frac{u^{1}_{n,xx}u^{1}_{n,xxx}}{\left(u^{1}_{n,x}\right)^2}-\frac{2}{3}\frac{u^{2}_{n,xx}u^{1}_{n,xxx}}{u^{1}_{n,x}u^{2}_{n,x}}
+\frac{2}{3}\frac{u^{2}_{n,xx}u^{2}_{n,xxx}}{\left(u^{2}_{n,x}\right)^2}
+\frac{1}{3}\frac{u^{1}_{n,xx}u^{2}_{n,xxx}}{u^{1}_{n,x}u^{2}_{n,x}}\\
&+\frac{32}{9}\left(\frac{u^{1}_{n,xx}}{u^{1}_{n,x}}\right)^3
+\frac{\left(u^{1}_{n,xx}\right)^2u^{2}_{n,xx}}{\left(u^{1}_{n,x}\right)^2u^{2}_{n,x}}
-\frac{2}{3}\frac{u^{1}_{n,xx}\left(u^{2}_{n,xx}\right)^2}{u^{1}_{n,x}\left(u^{2}_{n,x}\right)^2}
-\frac{8}{9}\left(\frac{u^{2}_{n,xx}}{u^{2}_{n,x}}\right)^3.
\end{aligned}
\end{equation}

Now we briefly explain how the integrals (\ref{sys4-x-int}) and (\ref{sys4-n-int}) were constructed. Let us begin with $x$-integrals. The Lax pair can be written as 
\begin{equation}\label{Laxsys4UV}
\Psi_{n,x}=V_n\Psi_{n}, \quad \Psi_{n+1}=U_n\Psi_{n},
\end{equation}
where  
\begin{equation*}\label{sys4UV}
V_n=\left(\begin{array}{ccc}
0 & 0 & 0\\
-u^1_{n,x} & 0 & 0\\
0 & -u^2_{n,x} & 0
\end{array} \right), \qquad
U_n=\left(\begin{array}{ccc}
u^1_n-c_1 & 1 & 0 \\
0 & u^2_n-u^1_{n+1} & 1\\
0 & 0 & c_2-u^2_{n+1}
\end{array} \right).
\end{equation*}
Since the Lax pair (\ref{Laxsys4UV}) satisfies the conditions of Lemma~\ref{LemP-x}, we can use it for deriving the $x$-integrals of the system (\ref{sys4N2}). To this end we evaluate the product
\begin{equation*}
P^{(k)}_n=U_{n+k} U_{n+k-1}\cdots U_{n}
\end{equation*}
and take its entry located at the right upper corner. It is easily checked that for $k=1$ and $k=2$ we get trivial integrals $J=1$ and $J=c_2-c_1$. For $k=3$ and $k=4$ we obtain 
\begin{align*}
p^{(3)}_{13,n}=&\left(u^1_{n+3}-c_1\right)\left(u^1_{n+2}-c_1\right)+\left(u^2_{n+2}-c_1\right)\left(u^1_{n+2}-c_2\right)+\left(u^2_{n+2}-c_2\right)\left(u^2_{n+1}-c_2\right),\\
p^{(4)}_{13,n}=&\left(u^1_{n+3}-c_1\right)\left(u^1_{n+2}-c_1\right)\left(u^1_{n+1}-c_1\right)-\left(u^1_{n+3}-c_1\right)\left(u^2_{n+1}-c_1\right)\left(u^1_{n+1}-c_2\right) \\
&-\left(u^1_{n+3}-u^2_{n+2}\right)\left(u^1_{n+2}-u^2_{n+1}\right)\left(u^1_{n+1}-c_2\right)+\left(c_2-c_1\right)\left(u^2_{n+1}-c_2\right)\left(u^2_{n}-c_2\right).
\end{align*}
After some slight simplifications we reduce them into integrals $J_1$ and $J_2$ given in (\ref{sys4-x-int}).

The next step is to explain the algebraic method of looking for the $n$-integrals. Let us emphasize that this task is more difficult since the algebra $L_n$ has a complicated structure. 

It is convenient to introduce new notations by setting $u_n:=u^1_n$ and $v_n:=u^2_n$. Thus we convert the system (\ref{sys4N2}) to the form
\begin{align}\label{sys4N2uv}
\begin{cases}
u_{n+1,x}=F^1,\\
v_{n+1,x}=F^2,
\end{cases}
\end{align}
where
\begin{equation*}
F^1=u_{n,x}\frac{\left(u_{n+1}-v_{n}\right)}{c_1-u_{n}}, \qquad
F^2=v_{n,x}\frac{v_{n+1}-c_2}{\left(u_{n+1}-v_{n}\right)}.
\end{equation*}
At first we evaluate the subalgebra $L^{(1)}_n$, generated by the vector fields $Y_{1,1}$, $Y_{2,1}$, $X_{1,1}$ and $X_{2,1}$, where
\begin{align*}
Y_{1,1}=&\frac{\partial}{\partial u_{n}}+D_n^{-1}\left(\frac{\partial F_n^1}{\partial u_{n+1}}\right)\frac{\partial}{\partial u_{n,x}}+D_n^{-1}\left(\frac{\partial F_n^2}{\partial u_{n+1}}\right)\frac{\partial}{\partial v_{n,x}}\\
&+D_n^{-1}\left(\frac{\partial F_{n,x}^1}{\partial u_{n+1}}\right)\frac{\partial}{\partial u_{n,xx}}+D_n^{-1}\left(\frac{\partial F_{n,x}^2}{\partial u_{n+1}}\right)\frac{\partial}{\partial v_{n,xx}}+\dots,\\
Y_{2,1}=&\frac{\partial}{\partial v_{n}}+D_n^{-1}\left(\frac{\partial F_n^1}{\partial v_{n+1}}\right)\frac{\partial}{\partial u_{n,x}}+D_n^{-1}\left(\frac{\partial F_n^2}{\partial v_{n+1}}\right)\frac{\partial}{\partial v_{n,x}}\\
&+D_n^{-1}\left(\frac{\partial F_{n,x}^1}{\partial v_{n+1}}\right)\frac{\partial}{\partial u_{n,xx}}+D_n^{-1}\left(\frac{\partial F_{n,x}^2}{\partial v_{n+1}}\right)\frac{\partial}{\partial v_{n,xx}}+\dots,\\
X_{1,1}=&\frac{\partial}{\partial u_{n-1}}, \qquad X_{2,1}=\frac{\partial}{\partial v_{n-1}}.
\end{align*}
Explicit form of the operators $Y_{1,1}$ and $Y_{2,1}$ are
\begin{align*}
Y_{1,1}=&\frac{\partial}{\partial u_{n}}+\frac{u_{n,x}}{u_n-v_{n-1}}\frac{\partial}{\partial u_{n,x}}-\frac{v_{n,x}}{u_n-v_{n-1}}\frac{\partial}{\partial v_{n,x}}\nonumber\\
&+\frac{1}{u_n-v_{n-1}}\left(u_{n,xx}+\frac{u_{n,x}v_{n,x}}{v_n-c_2}\right)\frac{\partial}{\partial u_{n,xx}}-\frac{1}{u_n-v_{n-1}}\left(v_{n,xx}+\frac{2v^2_{n,x}}{v_n-c_2}\right)\frac{\partial}{\partial v_{n,xx}}+\dots,\\
Y_{2,1}=&\frac{\partial}{\partial v_{n}}+\frac{v_{n,x}}{v_n-c_2}\frac{\partial}{\partial v_{n,x}}+\frac{v_{n,xx}}{v_n-c_2}\frac{\partial}{\partial v_{n,xx}}+\frac{v_{n,xxx}}{v_n-c_2}\frac{\partial}{\partial v_{n,xxx}}\dots.
\end{align*}
It can be verified by analyzing the further terms in the series for $Y_{1,1}$ that it is decomposed into a sum
\begin{equation*}
Y_{1,1}=Y_1+\frac{1}{u_n-v_{n-1}}Z_1,
\end{equation*}
of two operators belonging the algebra $L^{(1)}_n$, since $Z_1=(u_n-v_{n-1})^2[X_{2,1}, Y_{1,1}]$. Here $Y_1=\frac{\partial}{\partial u_{n}}$ and the coefficients in 
\begin{equation*}
Z_1=u_{n,x}\frac{\partial}{\partial u_{n,x}}-v_{n,x}\frac{\partial}{\partial v_{n,x}}
+\left(u_{n,xx}+\frac{u_{n,x}v_{n,x}}{v_n-c_2}\right)\frac{\partial}{\partial u_{n,xx}}-\left(v_{n,xx}+\frac{2v^2_{n,x}}{v_n-c_2}\right)\frac{\partial}{\partial v_{n,xx}}+\dots
\end{equation*}
do not depend on the variables $u_{n-1}$, $v_{n-1}$. Therefore operators $Y_1$, $Z_1$ and $Y_{2,1}$ commute with $X_{1,1}$ and $X_{2,1}$. Moreover the operators $X_{1,1}$, $X_{2,1}$, $Y_1$, $Z_1$ and $Y_{2,1}$ constitute a basis in the algebra $L_n^{(1)}$. Thus algebra $L_n^{(1)}$ is of dimension 5. However solution of the system 
\begin{equation*}
X_{1,1}I=0,\quad X_{2,1}I=0, \quad Y_1I=0, \quad Z_1I=0, \quad Y_{2,1}I=0
\end{equation*}
does not produce any $n$--integral for the system (\ref{sys4N2uv}), since the relation $D_nI=I$ is not satisfied.

Therefore, we need to pass to the consideration of the second subalgebra $L_n^{(2)}$ of the characteristic algebra $L_n$ generated by the operators
\begin{align*}
X_{1,1},\quad X_{2,1}, \quad X_{1,2},\quad X_{2,2}, \quad Y_{1,1}, \quad Y_{2,1}, \quad Y_{1,2}, \quad Y_{2,2}. 
\end{align*}
In other words we add to $L_n^{(1)}$ four extra operators 
\begin{equation*}
X_{1,2}=\frac{\partial}{\partial u_{n-2}},\quad X_{2,2}=\frac{\partial}{\partial v_{n-2}}, \quad Y_{1,2}=D_n^{-1}Y_{1,1}D_n, \quad Y_{2,2}=D_n^{-1}Y_{2,1}D_n. 
\end{equation*}
Two last operators are of the form
\begin{align*}
Y_{1,2}=&\frac{u_{n,x}(v_{n-2}-c_1)}{(u_{n-1}-v_{n-2})(u_{n-1}-c_1)}\frac{\partial}{\partial u_{n,x}}-\frac{v_{n,x}}{u_{n-1}-v_{n-2}}\frac{\partial}{\partial v_{n,x}}\nonumber\\
&+\frac{1}{u_{n-1}-v_{n-2}}\left(\frac{u_{n,xx}(v_{n-2}-c_1)}{u_{n-1}-c_1}+\frac{2u^2_{n,x}(v_{n-2}-c_1)}{(u_{n-1}-c_1)(u_n-v_{n-1})}+\frac{u_{n,x}v_{n,x}(u_n-c_2)}{(v_{n-1}-c_2)(v_n-c_2)}\right)\frac{\partial}{\partial u_{n,xx}}\\
&-\frac{1}{u_{n-1}-v_{n-2}}\left(v_{n,xx}+\frac{2v^2_{n,x}(u_n-c_2)}{(v_n-c_2)(v_{n-1}-c_2)}+\frac{u_{n,x}v_{n,x}(v_{n-2}-c_1)}{(u_{n-1}-c_1)(u_n-v_{n-1})}\right)\frac{\partial}{\partial v_{n,xx}}+\dots,\\
Y_{2,2}=&-\frac{u_{n,x}}{u_n-v_{n-1}}\frac{\partial}{\partial u_{n,x}}+\frac{v_{n,x}(u_n-c_2)}{(u_{n}-v_{n-1})(v_{n-1}-c_2)}\frac{\partial}{\partial v_{n,x}}\\
&-\frac{1}{u_n-v_{n-1}}\left(u_{n,xx}+\frac{u_{n,x}v_{n,x}(u_n-c_2)}{(v_n-c_2)(v_{n-1}-c_2)}\right)\frac{\partial}{\partial u_{n,xx}}\\
&+\frac{u_n-c_2}{(u_n-v_{n-1})(v_{n-1}-c_2)}\left(v_{n,xx}+\frac{2v^2_{n,x}}{v_n-c_2}\right)\frac{\partial}{\partial v_{n,xx}}+\dots.
\end{align*}
It can be proved that these operators are represented as linear combinations of the operators
\begin{align*}
&Z_2=v_{n,x}\frac{\partial}{\partial v_{n,x}}-
\frac{u_{n,x}v_{n,x}}{v_n-c_2}\frac{\partial}{\partial u_{n,xx}}
+\left(v_{n,xx}+\frac{2v^2_{n,x}}{v_n-c_2}\right)\frac{\partial}{\partial v_{n,xx}}+\dots,\\
&Y_3=u_{n,x}\frac{\partial}{\partial u_{n,x}}+u_{n,xx}\frac{\partial}{\partial u_{n,xx}}+u_{n,xxx}\frac{\partial}{\partial u_{n,xxx}}+\dots,\\
&Z_3=2u^2_{n,x}\frac{\partial}{\partial u_{n,xx}}-u_{n,x}v_{n,x}\frac{\partial}{\partial v_{n,xx}}+\left(6u_{n,x}u_{n,xx}+\frac{3u^2_{n,x}v_{n,x}}{v_n-c_2}\right)\frac{\partial}{\partial u_{n,xxx}}+\dots,\\
&Z_4=\frac{3}{2}v_{n,x}\frac{\partial}{\partial v_{n,xx}}+\left(u_{n,xx}-\frac{u_{n,x}v_{n,xx}}{v_{n,x}}+\frac{3}{2}\frac{u_{n,x}v_{n,x}}{v_n-c_2}\right)\frac{\partial}{\partial u_{n,xxx}}+\dots,\\
&Y_4=3u_{n,x}v_{n,x}\frac{\partial}{\partial u_{n,xxx}}-3v^2_{n,x}\frac{\partial}{\partial v_{n,xxx}}+\left(14v_{n,x}u_{n,xx}+4u_{n,x}v_{n,xx}\right)\frac{\partial}{\partial u_{n,xxxx}}+\dots
\end{align*}
which are simpler, since they commute with $X_{1,1}$, $X_{1,2}$, $X_{2,1}$, $X_{2,2}$.

We prove that system of equations 
\begin{equation}\label{sys4sysI}
Y_1I=0, \quad Y_{2,1}I=0, \quad Y_3I=0, \quad Z_2I=0, \quad Z_3I=0, \quad Z_4I=0, \quad Y_4I=0
\end{equation}
is closed or in other words algebra $L^{(2)}_n$ is of finite dimension. We look for the solution of (\ref{sys4sysI}) in the form
\begin{equation*}
I=I\left(u_n,v_n,u_{n,x},v_{n,x},u_{n,xx},v_{n,xx},u_{n,xxx},v_{n,xxx},u_{n,xxxx}\right).
\end{equation*}
We reduce it to normal form and solve by Jacobi method. As a result we get (\ref{sys4-n-int}), where one has to replace $u_n:=u^1_n$ and $v_n:=u^2_n$.

\subsection{Reductions of the equation (\ref{eq5})}
In equation (\ref{eq5}) we put $u^0_n=c_0, \quad u_n^{N+1}=c_N:$
\begin{align*}
\begin{cases}
u^{1}_{n+1,x}=u^{1}_{n,x}\frac{u^{1}_{n+1}\left(u^{1}_{n+1}-u^{2}_{n}\right)}{u^{2}_{n}\left(c_{0}-u^{1}_{n}\right)},\\
u^{j}_{n+1,x}=u^{j}_{n,x}\frac{u^{j}_{n+1}\left(u^{j}_{n+1}-u^{j+1}_{n}\right)}{u^{j+1}_{n}\left(u^{j-1}_{n+1}-u^{j}_{n}\right)}, \quad 2 < j < N-1, \\
u^{N}_{n+1,x}=u^{N}_{n,x}\frac{u^{N}_{n+1}\left(u^{N}_{n+1}-c_N\right)}{c_N\left(u^{N-1}_{n+1}-u^{N}_{n}\right)}.
\end{cases}
\end{align*}
We rewrite the Lax pair (\ref{Laxeq5}) with the above restrictions and by setting the conditions $\varphi^{-1}_n= 0$, $\varphi^{N+1}_n=0$
\begin{align*}
&\begin{cases}
\varphi^{0}_{n+1}=\left(1-\frac{c_0}{u^{1}_{n}}\right)\varphi^{0}_{n}-\frac{c_0}{u^{1}_{n}}\varphi^{1}_{n},\\
\varphi^{j}_{n+1}=\left(1-\frac{u^{j}_{n+1}}{u^{j+1}_{n}}\right)\varphi^{j}_{n}-\frac{u^{j}_{n+1}}{u^{j+1}_{n}}\varphi^{j+1}_{n},\quad 1 < j < N-1,\\
\varphi^{N}_{n+1}=\left(1-\frac{u^{N}_{n+1}}{c_N}\right)\varphi^{N}_{n},
\end{cases} \\
&\begin{cases}
\varphi^{0}_{n,x}=0,\\
\varphi^{j}_{n,x}=\varphi^{j}_{n,x}=\frac{u^{j}_{n,x}}{u^{j}_{n}}\left(\varphi^{j-1}_{n}+\varphi^{j}_{n}\right),\quad 1 < j < N-1,\\
\varphi^{N}_{n,x}=\frac{u^{N}_{n,x}}{u^{N}_{n}}\left(\varphi^{N-1}_{n}+\varphi^{N}_{n}\right).
\end{cases}
\end{align*}
We concentrate on a simplest case by taking $N=1$. Then we get an equation
\begin{align}\label{sys5N1}
u^{1}_{n+1,x}=u^{1}_{n,x}\frac{u^{1}_{n+1}\left(u^{1}_{n+1}-c_1\right)}{c_1\left(c_0-u^{1}_{n}\right)},
\end{align}
and its Lax pair
\begin{align*}
\begin{cases}
\varphi^{0}_{n+1}=\left(1-\frac{c_0}{u^{1}_{n}}\right)\varphi^{0}_{n}-\frac{c_0}{u^{1}_{n}}\varphi^{1}_{n},\\
\varphi^{1}_{n+1}=\left(1-\frac{u^{1}_{n+1}}{c_1}\right)\varphi^{1}_{n},
\end{cases} \quad
\begin{cases}
\varphi^{0}_{n,x}=0,\\
\varphi^{1}_{n,x}=\frac{u^{1}_{n,x}}{u^{1}_{n}}\left(\varphi^{0}_{n}+\varphi^{1}_{n}\right).
\end{cases}
\end{align*}
$x$-integrals and $n$-integrals of equation (\ref{sys5N1}) have the form 
\begin{equation*} 
J=\frac{u^1_n}{u^1_{n+1}}+\frac{c_0}{u^1_{n+1}}+\frac{u^1_n}{c_1}
\end{equation*}
and, respectively
\begin{equation*} 
I=\frac{u^{1}_{n,xxx}}{u^{1}_{n,x}}-\frac{3}{2}\frac{(u^{1}_{n,xx})^2}{(u^{1}_{n,x})^2}.
\end{equation*}

\subsection{Reductions of the equation (\ref{eq6})}
In equation (\ref{eq6}) we put $u^0_n=c_0$, $u_n^{N+1}=c_N$:
\begin{align*}
\begin{cases}
u^{1}_{n+1,x}=u^{1}_{n,x}\frac{\left(c_0-u^{1}_{n+1}\right)\left(u^{1}_{n+1}-u^{2}_{n}\right)}{\left(c_0-u^{1}_{n}\right)\left(u^{1}_{n}-u^{2}_{n}\right)},\\
u^{j}_{n+1,x}=u^{j}_{n,x}\frac{\left(u^{j-1}_{n+1}-u^{j}_{n+1}\right)\left(u^{j}_{n+1}-u^{j+1}_{n}\right)}{\left(u^{j-1}_{n+1}-u^{j}_{n}\right)\left(u^{j}_{n}-u^{j+1}_{n}\right)}, \quad 2 < j < N-1, \\
u^{N}_{n+1,x}=u^{N}_{n,x}\frac{\left(u^{N-1}_{n+1}-u^{N}_{n+1}\right)\left(u^{N}_{n+1}-c_N\right)}{\left(u^{N-1}_{n+1}-u^{N}_{n}\right)\left(u^{N}_{n}-c_N\right)}.
\end{cases}
\end{align*}
We rewrite the Lax pair (\ref{Laxeq6}) with the constraint $u^0_n=c_0$, $u_n^{N+1}=c_N$, $\varphi^{-1}_n=0$, $\varphi^{N+1}_n=0$ 
\begin{align*}
&\begin{cases}
\varphi^{0}_{n+1}=\frac{c_0-u^{1}_{n}}{c_0-u^{1}_{n}}\varphi^{0}_{n}+\left(1-\frac{c_0-u^{1}_{n}}{c_0-u^{1}_{n}}\right)\varphi^{1}_{n},\\
\varphi^{j}_{n+1}=\frac{u^{j}_{n+1}-u^{j+1}_{n}}{u^{j}_{n}-u^{j+1}_{n}}\varphi^{j}_{n}+\left(1-\frac{u^{j}_{n+1}-u^{j+1}_{n}}{u^{j}_{n}-u^{j+1}_{n}}\right)\varphi^{j+1}_{n}, \quad 1 < j < N-1,\\
\varphi^{N}_{n+1}=\frac{u^{N}_{n+1}-c_N}{u^{N}_{n}-c_N}\varphi^{N}_{n},
\end{cases}\\
&\begin{cases}
\varphi^{0}_{n,x}=0,\\
\varphi^{j}_{n,x}=\frac{u^{j}_{n,x}}{u^{j-1}_{n}-u^{j}_{n}}\left(\varphi^{j-1}_{n}+\varphi^{j}_{n}\right),\quad 1 < j < N-1,\\
\varphi^{N}_{n,x}=\frac{u^{N}_{n,x}}{u^{N-1}_{n}-u^{N}_{n}}\left(\varphi^{N-1}_{n}+\varphi^{N}_{n}\right),
\end{cases}
\end{align*}
By putting $N=2$ we arrive at a system
\begin{align}\label{sys6N2}
\begin{cases}
u^{1}_{n+1,x}=u^{1}_{n,x}\frac{\left(c_0-u^{1}_{n+1}\right)\left(u^{1}_{n+1}-u^{2}_{n}\right)}{\left(c_0-u^{1}_{n}\right)\left(u^{1}_{n}-u^{2}_{n}\right)},\\
u^{2}_{n+1,x}=u^{2}_{n,x}\frac{\left(u^{1}_{n+1}-u^{2}_{n+1}\right)\left(u^{2}_{n+1}-c_N\right)}{\left(u^{1}_{n+1}-u^{2}_{n}\right)\left(u^{2}_{n}-c_2\right)}.
\end{cases}
\end{align}
admitting the Lax pair
\begin{align*}
\begin{cases}
\varphi^{0}_{n+1}=\frac{c_0-u^{1}_{n}}{c_0-u^{1}_{n}}\varphi^{0}_{n}+\left(1-\frac{c_0-u^{1}_{n}}{c_0-u^{1}_{n}}\right)\varphi^{1}_{n},\\
\varphi^{1}_{n+1}=\frac{u^{1}_{n+1}-u^{2}_{n}}{u^{1}_{n}-u^{2}_{n}}\varphi^{1}_{n}+\left(1-\frac{u^{1}_{n+1}-u^{2}_{n}}{u^{1}_{n}-u^{2}_{n}}\right)\varphi^{2}_{n},\\
\varphi^{2}_{n+1}=\frac{u^{2}_{n+1}-c_2}{u^{2}_{n}-c_2}\varphi^{2}_{n},
\end{cases} \quad 
\begin{cases}
\varphi^{0}_{n,x}=0,\\
\varphi^{1}_{n,x}=\frac{u^{1}_{n,x}}{u^{0}_{n}-u^{1}_{n}}\left(\varphi^{0}_{n}+\varphi^{1}_{n}\right),\\
\varphi^{2}_{n,x}=\frac{u^{2}_{n,x}}{u^{1}_{n}-u^{2}_{n}}\left(\varphi^{1}_{n}+\varphi^{2}_{n}\right).
\end{cases}
\end{align*}
$x$-integrals and $n$-integrals of system (\ref{sys6N2}) have the form 
\begin{align*} 
&J_1=\frac{\left(u^{1}_n-c_0\right)\left(u^{2}_n-c_2\right)\left(u^{1}_{n+1}-u^{2}_n\right)}{\left(u^{1}_{n+1}-u^{1}_n\right)\left(u^{2}_{n+1}-u^{2}_n\right)};\\
&J_2=\frac{\left(u^{1}_{n+2}-c_0\right)\left(u^{2}_n-c_2\right)\left(u^{2}_{n+1}-u^{1}_{n+1}\right)\left(u^{2}_{n+2}-u^{2}_{n+1}\right)}{\left(u^{1}_{n+1}-c_0\right)\left(u^{2}_{n+2}-c_2\right)\left(u^{2}_{n+1}-u^{2}_n\right)\left(u^{1}_{n+2}-u^{2}_{n+1}\right)}
\end{align*}
and, respectively
\begin{align*} 
&I_1=\frac{u^1_{n,x}u^2_{n,x}}{\left(u^{1}_n-c_0\right)\left(u^{2}_n-c_2\right)\left(u^{1}_n-u^{2}_n\right)};\\
&I_2=\frac{u^1_{n,xx}}{u^1_{n,x}}-\frac{2u^1_{n,x}}{u^{1}_n-c_0}+\frac{u^2_{n,x}\left(u^{1}_n-c_2\right)}{\left(u^{2}_n-c_2\right)\left(u^{1}_n-u^{2}_n\right)}.
\end{align*}

\subsection{Reductions of the equation (\ref{eq7})}
In the equations (\ref{eq7}), (\ref{Laxeq7}) we put $u^0_n=c_0$,  $u_n^{N+1}=c_N$,  $\varphi^{-1}_n=0$,  $\varphi^{N+1}_n=0$ and get a system 
\begin{align*}
\begin{cases}
u^{1}_{n+1,x}=u^{1}_{n,x}\frac{\sinh\left(c_0-u^{1}_{n+1}\right)\sinh\left(u^{1}_{n+1}-u^{2}_{n}\right)}{\sinh\left(c_0-u^{1}_{n}\right)\sinh\left(u^{1}_{n}-u^{2}_{n}\right)},\\
u^{j}_{n+1,x}=u^{j}_{n,x}\frac{\sinh\left(u^{j-1}_{n+1}-u^{j}_{n+1}\right)\sinh\left(u^{j}_{n+1}-u^{j+1}_{n}\right)}{\sinh\left(u^{j-1}_{n+1}-u^{j}_{n}\right)\sinh\left(u^{j}_{n}-u^{j+1}_{n}\right)},\quad 1 < j < N-1,\\
u^{N}_{n+1,x}=u^{N}_{n,x}\frac{\sinh\left(u^{N-1}_{n+1}-u^{N}_{n+1}\right)\sinh\left(u^{N}_{n+1}-c_N\right)}{\sinh\left(u^{N-1}_{n+1}-u^{N}_{n}\right)\sinh\left(u^{N}_{n}-c_N\right)}.
\end{cases}
\end{align*}
with the Lax pair
\begin{align*}
&\begin{cases}
\varphi^{0}_{n+1}=\frac{e^{2\left(c_0-u^{1}{n}\right)}-1}{e^{2\left(c_0-u^{1}_{n}\right)}-1}\varphi^{0}_{n}+\left(1-\frac{e^{2\left(c_0-u^{1}_{n}\right)}-1}{e^{2\left(c_0-u^{1}_{n}\right)}-1}\right)\varphi^{1}_{n},\\
\varphi^{j}_{n+1}=\frac{e^{2\left(u^{j}_{n+1}-u^{j+1}{n}\right)}-1}{e^{2\left(u^{j}_{n}-u^{j+1}_{n}\right)}-1}\varphi^{j}_{n}+\left(1-\frac{e^{2\left(u^{j}_{n+1}-u^{j+1}_{n}\right)}-1}{e^{2\left(u^{j}_{n}-u^{j+1}_{n}\right)}-1}\right)\varphi^{j+1}_{n}, \quad 1 < j < N-1,\\
\varphi^{N}_{n+1}=\frac{e^{2\left(u^{N}_{n+1}-c_N\right)}-1}{e^{2\left(u^{N}_{n}-c_N\right)}-1}\varphi^{N}_{n},
\end{cases} \\
&\begin{cases}
\varphi^{0}_{n,x}=0,\\
\varphi^{j}_{n,x}=\frac{2u^{j}_{n,x}}{e^{2\left(u^{j-1}_{n}-u^{j}_{n}\right)}-1}\left(\varphi^{j-1}_{n}-\varphi^{j}_{n}\right),\quad 1 < j < N-1,\\
\varphi^{N}_{n,x}=\frac{2u^{N}_{n,x}}{e^{2\left(u^{N-1}_{n}-u^{N}_{n}\right)}-1}\left(\varphi^{N-1}_{n}-\varphi^{N}_{n}\right).
\end{cases}
\end{align*}
We put $N=1$ and obtain an equation
\begin{align}\label{sys7N1}
u^{1}_{n+1,x}=u^{1}_{n,x}\frac{\sinh\left(c_0-u^{1}_{n+1}\right)\sinh\left(u^{1}_{n+1}-c_1\right)}{\sinh\left(c_0-u^{1}_{n}\right)\sinh\left(u^{1}_{n}-c_1\right)},
\end{align}
having the Lax pair
\begin{align*}
\begin{cases}
\varphi^{0}_{n+1}=\frac{e^{2\left(c_0-u^{1}_{n}\right)}-1}{e^{2\left(c_0-u^{1}_{n}\right)}-1}\varphi^{0}_{n}+\left(1-\frac{e^{2\left(c_0-u^{1}_{n}\right)}-1}{e^{2\left(c_0-u^{1}_{n}\right)}-1}\right)\varphi^{1}_{n},\\
\varphi^{1}_{n+1}=\frac{e^{2\left(u^{1}_{n+1}-c_1\right)}-1}{e^{2\left(u^{1}_{n}-c_1\right)}-1}\varphi^{1}_{n},
\end{cases} \quad
\begin{cases}
\varphi^{0}_{n,x}=0,\\
\varphi^{1}_{n,x}=\frac{2u^{1}_{n,x}}{e^{2\left(c_0-u^{1}_{n}\right)}-1}\left(\varphi^{0}_{n}-\varphi^{1}_{n}\right).
\end{cases}
\end{align*}
$x$-integrals and $n$-integrals of equation (\ref{sys7N1}) have the form 
\begin{align*} 
J=&\frac{1}{\sinh(c_0-c_1)}\left[arctanh\left(\frac{(\cosh(c_0-c_1)+1)\tanh(\frac{c_0+c_1}{2}-u^{1}_n)}{\sinh(c_0-c_1)}\right)\right.\nonumber\\
&\left.-arctanh\left(\frac{(\cosh(c_0-c_1)+1)\tanh(\frac{c_0+c_1}{2}-u^{1}_{n+1})}{\sinh(c_0-c_1)}\right)\right]
\end{align*}
and, respectively
\begin{align*} 
I=\frac{u^{1}_{n,x}}{\sinh(c_0-u^{1}_n)\sinh(c_1-u^{1}_{n})}.
\end{align*}

\subsection{Scalar equations}

Here we present a list of Darboux integrable scalar equations obtained from the equations (\ref{eq3})--(\ref{eq7}) by imposing degenerate boundary conditions: 
\begin{align*}
&1) \quad u_{n+1,x}=u_{n,x}(u_{n+1})^2;\\
&2) \quad u_{n+1,x}=u_{n,x}\frac{c-u_{n+1}}{u_n};\\
&3) \quad u_{n+1,x}=u_{n,x}\frac{(u_{n+1}+c)(u_{n+1}+c_1)}{u_{n}};\\
&4) \quad u_{n+1,x}=u_{n,x}\frac{u_{n+1}(u_{n+1}-c)}{u_{n}(u_{n}-c)};\\
&5) \quad u_{n+1,x}=u_{n,x}\frac{\sinh (u_{n+1})\sinh(u_{n+1}-c)}{\sinh(u_{n})\sinh(u_{n}-c)}.
\end{align*}
Apparently they are new.

\section*{Conclusions}
The paper proposes an algebraic method for the classification of integrable cases of differential-difference equations of the form (\ref{eq0}). It is shown that all known integrable equations of this form given in the list presented in \cite{FNR} admit reductions as finite systems of differential-difference equations integrable in sense of Darboux. We used this property of the equation (\ref{eq0}) as the basis for the classification criterion. It is convenient to formalize the property of Darboux integrability in terms of its characteristic Lie-Rinehart algebras. The article gives a definition of these algebras for systems of differential-difference equations and investigates their basic properties that may be needed in solving the classification problem.

\section*{Funding}

One of the authors A.R. Khakimova was supported in part by Young Russian Mathematics award.

\end{document}